\newcommand*\widebar[1]{%
  \hbox{%
    \vbox{%
      \hrule height 0.5pt  
      \kern0.25ex
      \hbox{%
        \kern-0.1em
        \ensuremath{#1}%
        \kern-0.1em
      }%
    }%
  }%
} 
\newcommand*{\QEDA}{\hfill\ensuremath{\blacksquare}}
\newcommand{\set}[1]{\left\{ #1 \right\}}
\newcommand{\abs}[1]{\left\vert #1 \right\vert}
\newcommand{\Prob}[1]{\Pr\left( #1 \right)}
\newcommand{\Ex}[2][]{\mathbb{E}_{#1\!}\sqparen{#2}}
\newcommand{\Var}[1]{\operatorname{Var}\sqparen{#1}}
\newcommand{\Cov}[2]{\operatorname{Cov}\sqparen{#1, #2}}
\newcommand{\floor}[1]{\left\lfloor #1 \right\rfloor}
\newcommand{\paren}[1]{\left( #1 \right)}
\newcommand{\sqparen}[1]{\left[ #1 \right]}
\newcommand{\Unif}[2]{\operatorname{U}\!\paren{#1, #2}}
\newcommand{\Normal}[3][]{
\ifx\hfuzz#1\hfuzz 
\mathcal{N}\paren{#2, #3}
\else
\mathcal{N}\paren{#1; #2, #3}
\fi
}
\newcommand{\Beta}[2]{\operatorname{Beta}\paren{#1, #2}}
\newtheorem{theorem}{Theorem}
\newtheorem{corollary}[theorem]{Corollary}
\newtheorem{lemma}{Lemma}
\newtheorem{algorithm2}{Algorithm}
\newcounter{problemC}[section]
\begin{document}

\title{A Minimal Variance Estimator for the Cardinality of Big Data Set Intersection}
\author{Reuven Cohen~~~Liran Katzir~~~Aviv Yehezkel\\
Department of Computer Science\\
Technion\\
Haifa 32000, Israel\\
}
\date{}
\maketitle


\begin{abstract}
In recent years there has been a growing interest in developing ``streaming algorithms'' for efficient processing and querying of continuous data streams. These algorithms seek to provide accurate results while minimizing the required storage and the processing time, at the price of a small inaccuracy in their output. A fundamental query of interest is the intersection size of two big data streams. This problem arises in many different application areas, such as network monitoring, database systems, data integration and information retrieval. In this paper we develop a new algorithm for this problem, based on the Maximum Likelihood (ML) method. We show that this algorithm outperforms all known schemes and that it asymptotically achieves the optimal variance.
\end{abstract}


\section{Introduction} \label{sec:intro}

Classical processing algorithms for database management systems usually require several passes over (static) data sets in order to produce an accurate answer to a user query. However, for a wide range of application domains, the data set is very large and is updated on a continuous basis, making this approach impractical. 
For this reason, there is a growing interest in developing ``streaming algorithms'' for efficient processing and querying of continuous data streams in data stream management systems (DSMSs). These algorithms seek to provide accurate results while minimizing both the required storage and the processing time per stream element, at the price of a small inaccuracy in their output \cite{Beyer07, Cosma:2011, Dasu02, Metwally:2008}. Streaming algorithms for DSMSs typically summarize the data stream using a small sketch, and use probabilistic techniques in order to provide approximate answers to user queries.
Such big data streams appear in a wide variety of computer science applications. They are common, for example, in computer networks, where detailed usage statistics (such as the source IP addresses of packets) from different parts of the network need to be continuously collected and analyzed for various security and management tasks.

A fundamental query of interest is the intersection size of two big data streams.
Consider two streams of elements, $\widebar{A}$ and $\widebar{B}$, taken from two sets $A$ and $B$ respectively. Suppose that each element may appear more than once in each stream. Let $n=\abs{\widebar{A} \cap \widebar{B}}$. For example, for $\widebar{A}=a,b,c,d,a,b$ and $\widebar{B}=a,a,c,c$, we get $n=2$.
Finding $n$ is a problem that arises in many different application areas such as network monitoring, database systems, data integration and information retrieval \cite{Broder00, Broder97}. 

As an application example, $a_i \in A$ and $b_j \in B$ could be streams of IP packets passing through two routers, $R_1$ and $R_2$. In this case, $\abs{A \cap B}$ represents the number of flows forwarded by both routers. Thus, $\widehat{\abs{A \cap B}}$ can be used for security and traffic monitoring, e.g., on-line detection of denial of service attacks.

As another example, $a_i \in A$ and $b_j \in B$ could be sets of sub-strings found within two text documents. In this case, $\abs{A \cap B}$ represents the number of sub-strings shared by both documents, which can be viewed as the similarity of the documents. Among many other applications, this allows plagiarism detection and pruning of near-duplicate search results in search engines.

One can find the exact value of $n$ by computing the intersection set $C$ in the following way. For every element $b_i \in B$, compare $b_i$ to every $a_j \in A$ and $c_k \in C$. If $b_i \notin C$ and $b_i \in A$, add $b_i$ to $C$. After all the elements are treated, return the number of elements in $C$. This naive approach does not scale if storage is limited or if the sets are very large.
In these cases, the following estimation problem should be solved.
Given two streams of elements (with repetitions) $\widebar{A}=a_1, a_2, \ldots, a_p$, and $\widebar{B}=b_1, b_2, \ldots, b_q$, such that $A$ and $B$ are the respective sets of the two streams, and $n=\abs{A \cap B}$, find an estimate $\widehat{n}$ of $n$ using only $ m $ storage units, where $m \ll n$.

$\abs{A \cap B}$ can be estimated in a straightforward manner using the inclusion-exclusion principle: $\widehat{\abs{A \cap B}} = \widehat{\abs{A}} + \widehat{\abs{B}} - \widehat{\abs{A \cup B}} $, taking advantage of the fact that estimating $\abs{A}$, $\abs{B}$ and $\abs{A \cup B}$ is relatively easy.
However, we will later show that this scheme produces inaccurate results. In \cite{Beyer07}, it is proposed to estimate the Jaccard similarity, defined as
$\rho(A,B)=\frac{\abs{A \cap B}}{\abs{A \cup B}}$.
The idea is to estimate $\abs{A \cup B}$, and then to extract $\abs{A \cap B}$. A third scheme, proposed in \cite{Dasu02}, suggests that $\widehat{\abs{A \cap B}} = \widehat{\frac{\rho(A,B)}{\rho(A,B) + 1}}(\widehat{\abs{A}} + \widehat{\abs{B}})$.
With respect to the above three estimation schemes, the main contributions of this paper are as follows:
\begin{enumerate}
\item For the first time, we present a complete analysis of the statistical performance (bias and variance) of the above three schemes.
\item We find the optimal (minimum) variance of any unbiased set intersection estimator.
\item We present and analyze a new unbiased estimator, based on the Maximum Likelihood (ML) method, which outperforms the above three schemes.
\end{enumerate}

The rest of the paper is organized as follows. Section \ref{sec:related} discusses previous work and presents the three previously known schemes. 
Section \ref{sec:MLscheme} presents our new Maximum Likelihood (ML) estimator. It also shows that the new scheme achieves optimal variance and that it outperforms the three known schemes.
Section \ref{sec:analysis} analyzes the statistical performance (bias and variance) of the three known schemes. Section \ref{sec:sim} presents simulation results confirming that the new ML estimator outperforms the three known schemes.
Finally, Section \ref{sec:conclusion} concludes the paper.


\section{Related Work and Previous Schemes} \label{sec:related}

The database research community has extensively explored the problem of data cleaning: detecting and removing errors and inconsistencies from data to improve the quality of databases \cite{Rahm00}. Identifying which fields share similar values, identifying join paths, estimating join directions and sizes, and detecting inclusion dependencies are well-studied aspects of this problem \cite{Bauckmann07,Clifton97,Dasu02,Hernandez98,Monge00}.
For example, in \cite{Dasu02} the authors present several methods for finding related database fields. Their main idea is to hash the values of each field and keep a small sketch that contains the minimal hash values for each. Then, the Jaccard similarity is used to measure similarities between fields. 
In \cite{Bauckmann07}, the authors study the related problem of detecting inclusion dependencies, i.e., pairs of fields $A$ and $B$ such that $A \subseteq B$, and they present an efficient way to test all field pairs in parallel. 

All the above problems are closely related to the ``cardinality estimation problem" discussed in this paper.
This problem has received a great deal of attention in the past decade thanks to the growing number of important real-time ``big data" applications, such as estimating the propagation rate of viruses, detecting DDoS attacks \cite{Giroire2007, Ganguly2007}, and measuring general properties of network traffic \cite{Metwally:2008}.

Many works address the cardinality estimation problem \cite{Cosma:2011, CohenK08, Flajolet2007, Giroire2009, Lumbroso2010, Metwally:2008} and propose statistical algorithms for solving it. These algorithms are usually limited to performing only one pass on the received packets and using a fixed small amount of memory. A common approach is to hash every element into a low-dimensional data sketch, which can be viewed as a uniformly distributed random variable. Then, one of the following schemes is often used to estimate the number of distinct elements in the set:
\begin{enumerate}
\item Order-statistics based estimators: In this family of schemes, the identities of the smallest (or largest) $k$ elements are remembered for the considered set. These values are then used for estimating the total number of distinct elements \cite{Beyer07, CohenK08, Giroire2009, Lumbroso2010}. The family of estimators with $k=1$ (where the minimal/maximal identity is remembered) is also known as min/max sketches. 
\item Bit-pattern based estimators: In this family of schemes, the highest position of the leftmost 1-bit in the binary representation of the identity of each element is remembered and then used for the estimation \cite{Cosma:2011, Flajolet2007}.
\end{enumerate}
If only one hash function is used, the schemes estimate the value of $n$ with an infinite variance. To bound the variance, both schemes repeat the above procedures for $m$ different hash functions and use their combined statistics for the estimation\footnote{Stochastic averaging can be used to reduce the number of hash functions from $m$ to only two \cite{Flajolet:1985}.}.

A comprehensive overview of different cardinality estimation techniques is given in \cite{Cosma:2011, Metwally:2008}. State-of-the art cardinality estimators have a standard error of about $ 1/\sqrt{m} $, where $ m $ is the number of storage units \cite{Chassaing2006}. The best known cardinality estimator is the HyperLogLog algorithm \cite{Flajolet2007}, which belongs to the family of min/max sketches and has a standard error of $ 1.04 / \; \sqrt{m} $, \cite{Flajolet2007}. For instance, this algorithm estimates the cardinality of a set with $ 10^{9} $ elements with a standard error of $ 2 \% $ using $ m=2,048 $ storage units.

Cardinality estimation algorithms can be used for estimating the cardinality of set intersection. As mentioned in Section \ref{sec:intro}, a straightforward technique is to estimate the intersection using the following inclusion-exclusion principle:
\begin{equation*} 
\widehat{\abs{A \cap B}} = \widehat{\abs{A}} + \widehat{\abs{B}} - \widehat{\abs{A \cup B}} \text{.}
\end{equation*}
This method will be referred to as Scheme-1.
Other algorithms first estimate the Jaccard similarity $\rho(A,B)=\frac{\abs{A \cap B}}{\abs{A \cup B}}$, and then use some algebraic manipulation on it \cite{Beyer07, Dasu02}.
Specifically, the scheme proposed in \cite{Beyer07} estimates both the Jaccard similarity and $\abs{A\cup B}$, and then uses
\begin{equation*} 
\widehat{\abs{A \cap B}} = \widehat{\rho(A,B)}\cdot \widehat{\abs{A \cup B}} \text{.}
\end{equation*}
This scheme will be referred to as Scheme-2.
The third scheme discussed in this paper, referred to as Scheme-3, is presented in \cite{Dasu02}. It estimates the Jaccard similarity, $\abs{A}$, and $\abs{B}$, and then uses 
\begin{equation*} 
\widehat{\abs{A \cap B}} = \widehat{\frac{\rho(A,B)}{\rho(A,B) + 1}}(\widehat{\abs{A}} + \widehat{\abs{B}}) \text{.}
\end{equation*}
The above equation is obtained by substituting the Jaccard similarity definition into $\frac{\rho(A,B)}{\rho(A,B)+1}$, which yields that
\begin{equation*}
\frac{\rho(A,B)}{\rho(A,B) + 1} = \frac{\abs{A \cap B}}{\abs{A \cap B} + \abs{A \cup B}} = \frac{\abs{A \cap B}}{\abs{A} + \abs{B}} \text{.}
\end{equation*}

In \cite{Bauckmann07, Kohler10}, the set intersection estimation problem is solved using smaller sample sets. While these techniques are simple and unbiased, they are inaccurate for small sets. In addition, they are sensitive to the arrival order of the data, and to the repetition pattern.


\section{A New Maximum Likelihood Scheme with Optimal Variance} \label{sec:MLscheme}

In this section we present a new unbiased estimator for the set intersection estimation problem. Because this estimator is based on the Maximum Likelihood (ML) method, it achieves optimal variance and outperforms the three known schemes.

Maximum-Likelihood estimation (ML) is a method for estimating the parameters of a statistical model. 
For example, suppose we are interested in the height distribution of a given population, but are unable to measure the height of every single person. Assuming that the heights are Gaussian distributed with some unknown mean and variance, the mean and variance can be estimated using ML and only a small sample of the overall population.
In general, for a given set of data samples and an underlying statistical model, ML finds the values of the model parameters that maximize the likelihood function, namely, the ``agreement" of the selected model with the given sample.

In the new scheme, we first find the (probability density) likelihood function of the set intersection estimation problem, $L(x_A=s,x_B=t ; \theta)$; namely, given $\theta=(a,b,n)$ as the problem parameters\footnote{The set intersection estimation problem has six identifying parameters ($n,u,a,b,\alpha,\beta$), only three of which are needed to derive the others.}, this is the probability density of $s$ to be the maximal hash value for $A$ and $t$ to be the maximal hash value for $B$. Then, we look for values of the problem parameters that maximize the likelihood function.

Table \ref{table:notations} shows some of the notations we use for the rest of the paper.

\begin{table}[htb!]
\centering 
	\begin{tabular}{|c|c|} \hline
   value & notation   \\\hline
	$\abs{A \cap B}$ & $n$  \\\hline
	$\abs{A \cup B}$ & $u$  \\\hline
	$\abs{A}$ & $a$  \\\hline
	$\abs{B}$ & $b$  \\\hline
	$\abs{A \setminus B}$ & $\alpha$  \\\hline
	$\abs{B \setminus A}$ & $\beta$   \\\hline
	\end{tabular}
	\caption{Notations} 
	\label{table:notations}
\end{table}

\subsection{The Likelihood Function of the Set Intersection Estimation Problem} \label{sub:likelihood}

We first find the likelihood function for one hash function $h_k$, namely, $L(x_A=s,x_B=t ; \theta)_k$, and then generalize it for all hash functions.
Recall that for the $k$th hash function, $x_A^k=\max_{i=1}^{a}\set{h_k(a_i)}$ and $x_B^k=\max_{j=1}^{b}\set{h_k(b_j)}$. To simplify the notation, we shall omit the superscript $k$ for $x_A^k$ and $x_B^k$, and use $x_A$ and $x_B$ respectively.

We use $\text{PDF}_U(w)$ to denote the probability density function (PDF) of a uniformly distributed random variable $\Unif{0}{1}$ at $w$.
Denote the elements in $A\cap B$ as $\set{z_1,z_2,\ldots,z_n}$. Thus, the elements in $A$ and the elements in $B$ can be written as $\set{x_1,x_2,\ldots,x_{\alpha}, z_1,z_2,\ldots,z_n}$ and $\set{y_1,y_2,\ldots,y_{\beta}, z_1,z_2,\ldots,z_n}$ respectively (see Table \ref{table:notations}).

We now divide the likelihood function according to the three possible relations between $x_A$ and $x_B$: $x_A=x_B$, $x_A>x_B$ and $x_A<x_B$.

\begin{description}
    \item[\textnormal{Case 1:} $x_A = x_B$] \hfill \smallskip

When $x_A=x_B=s$ holds, the element with the maximal hash value must belong to $A \cap B$. The likelihood function of $\theta$ given this outcome is \begin{align}
\label{eq:case1}
L(x_A=x_B=s ; \theta) &= \sum_{i=1}^{n}{\text{PDF}_U(s) \cdot \Prob{x_{(A\cup B) \setminus {\{z_i\}}} < s}} \nonumber \\
&=\sum_{i=1}^{n}s^{u-1} = n\cdot s^{u-1}
\text{.}
\end{align}
This equality holds because there are $n$ possible elements in $A\cap B$ whose hash value can be the maximum in $A\cup B$, and because $\text{PDF}_U(s)=1$.

    \item[\textnormal{Case 2:} $x_A < x_B$] \hfill \smallskip  

In order to have $x_A < x_B$, where $x_A=s$ and $x_B=t$, the maximal hash value in $B$ must also be in $B\setminus A$, and its value must be $t$. The likelihood function of $\theta$ in this case is
\begin{equation*}
L(x_{B\setminus A}=t ; \theta) = \sum_{j=1}^{\beta} {\text{PDF}_U(t) \cdot \Prob{x_{(B\setminus A) \setminus {\{y_j\}}} < t}} = \beta \cdot t^{\beta -1} \text{.}
\end{equation*}

In addition, the maximal hash value in $A$ must be $s$. The probability density for this is
\begin{equation*}
L(x_A=s ; \theta) = \sum_{e \in A} {\text{PDF}_U(s) \cdot \Prob{x_{A \setminus {\{e\}}} < s}} = a s^{a-1} \text{.}
\end{equation*}

Thus, 
\begin{equation} 
\label{eq:case2}
L(x_A < x_B , x_A=s , x_B=t ; \theta) = a s^{a - 1} \cdot \beta t^{\beta - 1} \text{.}
\end{equation}

	\item[\textnormal{Case 3:} $x_A > x_B$] \hfill \smallskip

This case is symmetrical to the previous case. Thus, the likelihood function of $\theta$ in this case is
\begin{equation*}
L(x_A>x_B , x_A=s , x_B=t ; \theta) = \alpha s^{\alpha - 1} b t^{b - 1} \text{.}
\end{equation*}
\end{description}

Thus, the likelihood function for set intersection is
\begin{equation*}
L(x_A=s,x_B=t ; \theta)_k = 
\begin{cases} 
      n s^{u - 1} & x_A = x_B \\
      \beta t^{\beta - 1} a s^{a - 1} & x_A < x_B \\
      \alpha s^{\alpha - 1} b t^{b - 1} & x_A > x_B \: \text{.}
   \end{cases}
\end{equation*}
We now use the following indicator variables:
\begin{enumerate}
\item $I_{1} = 1$ if $x_A=x_B$, and $I_{1} = 0$ otherwise,
\item $I_{2} = 1$ if $x_A<x_B$, and $I_{2} = 0$ otherwise,
\item $I_{3} = 1$ if $x_A>x_B$, and $I_{3} = 0$ otherwise,
\end{enumerate}
to obtain that
\begin{equation}\label{eq:hkhash}
L(x_A=s,x_B=t ; \theta)_k = (n s^{u - 1}) ^ {I_{1}} 
\cdot (\beta t^{\beta - 1} a s^{a - 1}) ^ {I_{2}}
\cdot (\alpha s^{\alpha - 1} b t^{b - 1}) ^ {I_{3}} \text{.}
\end{equation}

Eq. (\ref{eq:hkhash}) states the likelihood function for one hash function. To generalize this equation to all $m$ hash functions, denote $S=(s_1,s_2,\ldots,s_m)$ and $T=(t_1,t_2,\ldots,t_m)$ as the $m$-dimensional vectors of $s$ and $t$ for each hash function.
\begin{corollary}\label{cor:fullLikeFun} \ \\
The likelihood function for the set intersection estimation problem, for all $m$ hash functions, satisfies:
\begin{equation*} 
L(x_A=S,x_B=T ; \theta) = \prod_{k=1}^{m}(n (s_k)^{u - 1}) ^ {I_{1,k}} 
\cdot (\beta (t_k)^{\beta - 1} a (s_k)^{a - 1}) ^ {I_{2,k}}
\cdot (\alpha (s_k)^{\alpha - 1} b (t_k)^{b - 1}) ^ {I_{3,k}} \text{,}
\end{equation*}
where $I_{1,k}$ is the value of $I_{1}$ for hash function $k$, and the same holds for $I_{2,k}$ and $I_{3,k}$.
\QEDA
\end{corollary}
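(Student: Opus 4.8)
\proof
The statement is essentially an independence claim, so the plan is to reduce it to Eq.~(\ref{eq:hkhash}) via the fact that the $m$ hash functions are chosen independently. First I would argue that the pairs $(x_A^k,x_B^k)$, $k=1,\dots,m$, are mutually independent random variables: $x_A^k=\max_i h_k(a_i)$ and $x_B^k=\max_j h_k(b_j)$ are measurable functions of the randomness defining $h_k$ alone, and for distinct $k$ these sources of randomness are independent, so the vector $\bigl((x_A^1,x_B^1),\dots,(x_A^m,x_B^m)\bigr)$ has product law. From this independence, the joint (density) likelihood evaluated at the observed point $(S,T)=\bigl((s_1,\dots,s_m),(t_1,\dots,t_m)\bigr)$ factorizes as $\prod_{k=1}^{m} L(x_A=s_k,x_B=t_k ; \theta)_k$.

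Next I would substitute into each factor the single-hash likelihood already established in Eq.~(\ref{eq:hkhash}), with the indicators $I_1,I_2,I_3$ specialized to the hash function $h_k$ and relabeled $I_{1,k},I_{2,k},I_{3,k}$ (for each fixed $k$ exactly one of them equals $1$, according to whether $s_k=t_k$, $s_k<t_k$, or $s_k>t_k$). Collecting the product then yields precisely the displayed formula, so beyond this substitution no further manipulation is required.

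I expect the only delicate point to be making the word ``density'' precise here. For a single hash function the law of $(x_A^k,x_B^k)$ is a mixture: an absolutely continuous piece on $\{x_A\neq x_B\}$ together with a lower-dimensional piece supported on the diagonal $\{x_A=x_B\}$, which is exactly why the three cases appear in Eq.~(\ref{eq:hkhash}). So ``density'' means the Radon--Nikodym derivative with respect to that mixture reference measure, and the step that needs checking is that the reference measure for the $m$-fold problem is the $m$-fold product of the single-hash reference measure. Granting that (it follows from independence together with Fubini's theorem), the density of a product measure is the product of the densities, and the corollary follows. There is genuinely no computation to grind through beyond what Eq.~(\ref{eq:hkhash}) already contains.
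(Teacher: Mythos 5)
Your proposal is correct and follows exactly the route the paper intends: the paper states the corollary without proof, treating it as immediate from the independence of the $m$ hash functions applied to the single-hash likelihood of Eq.~(\ref{eq:hkhash}), which is precisely your factorization argument. Your additional remark about the mixed reference measure (absolutely continuous off the diagonal plus a singular diagonal component) is a valid clarification of a point the paper glosses over, but it does not change the substance of the argument.
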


It is usually easier to deal with the log of a likelihood function than with the likelihood function itself.
Because the logarithm is a monotonically increasing function, its maximum value is obtained at the same point as the maximum of the function itself. 
In our case,
\begin{align} \label{eq:loglike}
\log{L(x_A=S,x_B=T ; \theta)} &= \log{\prod_{k=1}^{m}L(x_A^k=s_k,x_B^k=t_k ; \theta)_k} 
\nonumber \\ &= \sum_{k=1}^{m}{\log{L(x_A^k=s_k,x_B^k=t_k ; \theta)_k}} \nonumber \\
&= \sum_{k=1}^{m}{I_{1,k} \cdot \log{(n \cdot (s_k)^{u - 1})}} 
+ \sum_{k=1}^{m}{I_{2,k} \cdot \log{(\beta \cdot (t_k)^{\beta - 1} \cdot a \cdot (s_k)^{a - 1})}} \nonumber \\
&+ \sum_{k=1}^{m}{I_{3,k} \cdot \log{(\alpha \cdot (s_k)^{\alpha - 1} \cdot b \cdot (t_k)^{b - 1})}} \text{.}
\end{align}

\subsection{The New Scheme} \label{sub:ml}

We use Corollary \ref{cor:fullLikeFun} in order to find $\theta=(a,b,n)$ that maximizes Eq. (\ref{eq:loglike}). 
Let $g(a,b,n)$ and $\mathbb{H}(a,b,n)$ be the gradient and the Hessian matrix of the log-likelihood function.
Namely, $g$ is the vector whose components are the partial derivatives of the log-likelihood function for the problem parameters $\theta=(a,b,n)$:
\begin{equation}\label{eq:g}
g(a,b,n) = (\frac{\partial \log L}{\partial a}, \frac{\partial \log L}{\partial b}, \frac{\partial \log L}{\partial n}) \text{,}
\end{equation}
and $\mathbb{H}$ is the matrix of the second-order partial derivatives of the log-likelihood function
\begin{equation} \label{eq:h}
\mathbb{H}_{i,j} = \frac{\partial^2}{\partial \theta_i \partial \theta_j} \log L \:\:\:\:\: , \: 1\leq i,j \leq 3 \: \text{,}
\end{equation}
where $\theta_1=a$, $\theta_2=b$ and $\theta_3=n$.

The new scheme finds the maximal value of the log-likelihood function, i.e., the root of its gradient $g(a,b,n)$. Practically, this is done using iterations of the Newton-Raphson method on the gradient and Hessian matrix $g$ and $H$. Starting from an initial estimation $\widehat{\theta_0}=(\widehat{a_0},\widehat{b_0},\widehat{n_0})$, the Newton-Raphson method implies that a better estimation is
$\widehat{\theta_1} = \widehat{\theta_0} - H^{-1}(\widehat{\theta_0}) \cdot g(\widehat{\theta_0})$.
The process is repeated, namely,
\begin{equation}\label{eq:nr}
\widehat{\theta_{l+1}} = \widehat{\theta_l} - H^{-1}(\widehat{\theta_l}) \cdot g(\widehat{\theta_l}) \text{,}
\end{equation} 
until a sufficiently accurate estimation is reached. This idea is summarized in the following algorithm.

\begin{algorithm2}
\textbf{\newline (A Maximum Likelihood scheme for the set intersection estimation problem)}
\label{alg:MLscheme} \ \\
The scheme gets as an input the sketches of the sets $\set{x_A^k}_{k=1}^m$ and $\set{x_B^k}_{k=1}^m$, where $x_A^k$ and $x_B^k$ are the maximal hash values of $A$ and $B$ respectively for the $k$th hash function, and returns an estimate of their set intersection cardinality.
\begin{enumerate}
\item [1)] Estimate $a_0=\widehat{a}$, $b_0=\widehat{b}$ and $\widehat{u}$ using any cardinality estimation algorithm, such as \cite{Flajolet2007}.
\item [2)] Estimate the Jaccard similarity $\widehat{\rho}$ from the given sketches of $A$ and $B$.
\item [3)] Find the maximum of the likelihood function $L$ (Eq. (\ref{eq:loglike})) as explained above; use $n_0=\widehat{\rho}\cdot\widehat{u}$ as an initial value of $n$ (see Scheme-2 in Section 1), and $a_0,b_0$ as an initial values of $a$ and $b$ respectively.
\item [4)] Return $\widehat{n}$.
\end{enumerate}
\end{algorithm2}
When we implemented Algorithm \ref{alg:MLscheme}, we discovered that 3 Newton-Raphson iterations are enough for the algorithm to converge.

\subsection{The Optimal Variance of the New Estimator} \label{sub:optimal}

The new estimator proposed in this section is based on Maximum Likelihood and thus it asymptotically achieves optimal variance \cite{Shao1998}. We use the Cramer-Rao bound to compute this optimal variance.

The Cramer-Rao bound states that the inverse of the Fisher information matrix is a lower bound on the variance of any unbiased estimator of $\theta$ \cite{Shao1998}.
The Fisher information matrix $\mathbb{F}_{i,j}$ is a way of measuring the amount of information that a random variable $X$ carries about an unknown parameter $\theta$ upon which the probability of $X$ depends. It is defined as:
\begin{equation} \label{eq:fisher}
\mathbb{F}_{i,j} = -\Ex{\frac{\partial^2}{\partial \theta_i \partial \theta_j} \log L} \text{.}
\end{equation}

We now use the log-likelihood function (Eq. (\ref{eq:loglike})) to derive this matrix for the set intersection estimation problem:
\begin{equation} \label{eq:fisherMat}
\mathbb{F}(a,b,n) =
m \cdot \left( \begin{array}{ccc}
\frac{\beta}{u} \cdot \frac{1}{a^2} + \frac{1}{u \cdot \alpha} & 0 & \frac{-1}{u \cdot \alpha} \\
0 & \frac{\alpha}{u} \cdot \frac{1}{b^2} + \frac{1}{u \cdot \beta} & \frac{-1}{u \cdot \beta} \\
\frac{-1}{u \cdot \alpha} & \frac{-1}{u \cdot \beta} & \frac{1}{u \cdot n} + \frac{1}{u \cdot \beta} + \frac{1}{u \cdot \alpha} \end{array} \right) \text{,}
\end{equation}
where each term is derived due to algebraic manipulations and derivatives of the log-likelihood function. Note that the expected values of the indicator variables $I_{1,k},I_{2,k}$ and $I_{3,k}$ are required to derive the matrix. 
For $I_{1,k}$ we get:
\begin{equation*}
\Ex{I_{1,k}} = \Prob{x_A^k=x_B^k} = \frac{n}{u} \:\:\:\:\: , \: 1\leq k \leq m \: \text{.}
\end{equation*}
The first equality is due to the definition of $I_{1,k}$, and the second is due to Eq. (\ref{eq:pro}). The following are obtained in the same way for every $1\leq k \leq m$:
\begin{enumerate}
\item $\Ex{I_{2,k}} = \Prob{x_A^k<x_B^k}=\frac{\beta}{u}$.
\item $\Ex{I_{3,k}} = \Prob{x_A^k>x_B^k}=\frac{\alpha}{u}$.
\end{enumerate}

Let $\widehat{\abs{A \cap B}}$ be an unbiased estimator for the set intersection estimation problem. 
Then, according to the Cramer-Rao bound, $\Var{\widehat{\abs{A \cap B}}} \geq (\mathbb{F}^{-1})_{3,3}$, where $(\mathbb{F}^{-1})_{3,3}$ is the term in place $[3,3]$ in the inverse Fisher information matrix. Finally, from the computation of the term ($\mathbb{F}^{-1})_{3,3}$, we can obtain the following corollary:
\begin{corollary} \ \\
$\Var{\widehat{\abs{A \cap B}}} \geq (\mathbb{F}^{-1})_{3,3}$, where, \\

$(\mathbb{F}^{-1})_{3,3} = \frac{n \cdot u}{m} \cdot \frac{(b^2 + \alpha\beta)(a^2 + \alpha\beta)} {{\alpha\cdot n}(a^2 + \alpha\beta) + \beta \cdot n(b^2 + \alpha\beta) + (a^2 + \alpha\beta)(b^2 + \alpha\beta)}$.
\end{corollary}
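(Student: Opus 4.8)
The plan is to invert the $3\times 3$ Fisher matrix of Eq.~(\ref{eq:fisherMat}) in closed form and extract its $[3,3]$ entry. First I would factor out the common structure: since every entry of $\mathbb{F}$ is $\tfrac{m}{u}$ times a simpler expression, write $\mathbb{F}=\tfrac{m}{u}\,M$ with
\[
M=\left(\begin{array}{ccc}
\frac{\beta}{a^{2}}+\frac{1}{\alpha} & 0 & -\frac{1}{\alpha}\\
0 & \frac{\alpha}{b^{2}}+\frac{1}{\beta} & -\frac{1}{\beta}\\
-\frac{1}{\alpha} & -\frac{1}{\beta} & \frac{1}{n}+\frac{1}{\beta}+\frac{1}{\alpha}
\end{array}\right),
\]
so that $\mathbb{F}^{-1}=\tfrac{u}{m}\,M^{-1}$ and therefore $(\mathbb{F}^{-1})_{3,3}=\tfrac{u}{m}\,(M^{-1})_{3,3}$. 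It thus suffices to compute the single entry $(M^{-1})_{3,3}$.

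By the cofactor (adjugate) formula for the inverse, $(M^{-1})_{3,3}$ equals the $(3,3)$ cofactor of $M$ divided by $\det M$, and the $(3,3)$ cofactor is precisely the determinant of the top-left $2\times 2$ block of $M$. Since the off-diagonal entry of that block vanishes, this determinant is just $\paren{\frac{\beta}{a^{2}}+\frac{1}{\alpha}}\paren{\frac{\alpha}{b^{2}}+\frac{1}{\beta}}$; putting each factor over a common denominator and abbreviating $P:=a^{2}+\alpha\beta$ and $Q:=b^{2}+\alpha\beta$, it equals $\frac{PQ}{\alpha\beta\,a^{2}b^{2}}$.

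For the denominator I would expand $\det M$ along the first row. Only the $(1,1)$ and $(1,3)$ entries are nonzero, so this leaves two $2\times 2$ minors; after expanding them, clearing denominators over the common factor $n\alpha^{2}\beta\,a^{2}b^{2}$, and using the elementary identities $P-a^{2}=\alpha\beta$ and $Q-b^{2}=\alpha\beta$ to absorb the cross terms, the numerator collapses to $\alpha\paren{PQ+n\alpha P+n\beta Q}$, giving $\det M=\frac{PQ+n\alpha P+n\beta Q}{n\alpha\beta\,a^{2}b^{2}}$. Dividing the $2\times 2$ determinant by this, the common factor $\alpha\beta\,a^{2}b^{2}$ cancels and $(M^{-1})_{3,3}=\frac{nPQ}{PQ+n\alpha P+n\beta Q}$; multiplying by $u/m$ and substituting back $P=a^{2}+\alpha\beta$, $Q=b^{2}+\alpha\beta$ reproduces the claimed expression for $(\mathbb{F}^{-1})_{3,3}$ exactly. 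The only nontrivial step is the bookkeeping in the first-row expansion of $\det M$ and recognizing the factorization of its numerator; the identities $P-a^{2}=Q-b^{2}=\alpha\beta$ are what force that collapse, and introducing the shorthand $P,Q$ from the start keeps the intermediate expressions manageable.
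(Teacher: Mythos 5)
Your computation is correct and follows the same route the paper takes (Cramer--Rao plus explicit inversion of the Fisher matrix of Eq.~(\ref{eq:fisherMat})); indeed you supply the cofactor/determinant bookkeeping that the paper omits, and your final expression matches the stated $(\mathbb{F}^{-1})_{3,3}$ exactly. The only blemish is a typo in the intermediate step: the common denominator in the expansion of $\det M$ should be $n\alpha^{2}\beta^{2}a^{2}b^{2}$ (with numerator $\alpha\beta\paren{PQ+n\alpha P+n\beta Q}$), not $n\alpha^{2}\beta\,a^{2}b^{2}$; the quotient, and hence your value of $\det M$, is unaffected.
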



\section{An Analysis of the Three Schemes From Section 1} \label{sec:analysis}

In this section we will analyze the statistical performance (bias and variance) of the three schemes discussed in Section 1 for set intersection estimation.

\subsection{Preliminaries}

\subsubsection{Jaccard Similarity}
Recall that the Jaccard similarity is defined as: $\rho(A,B)=\frac{\abs{A \cap B}}{\abs{A \cup B}}$, where $A$ and $B$ are two finite sets. Its value ranges between $0$, when the two sets are completely different, and $1$, when the sets are identical.
An efficient and accurate estimate of $\rho$ can be computed as follows \cite{Broder97new}. 
Each item in $A$ and $B$ is hashed into $(0,1)$, and the maximal value of each set is considered as a sketch that represents the whole set. To improve accuracy, $m$ hash functions are used\footnote{Stochastic averaging can be used to reduce the number of hash functions from $m$ to only two \cite{Flajolet:1985}.}, and the sketch of each set is a vector of $m$ maximal values. Given a set $A = \set{a_1,a_2,\ldots,a_p}$ and $m$ different hash functions $h_1,h_2,\ldots,h_m$, the maximal hash value for the $j$th hash function can be formally expressed as:
\begin{equation*}
x_A^j=\max_{i=1}^{p}\set{h_j(a_i)} \:\:,\:\: 1\leq j \leq m \text{,} 
\end{equation*} 
and the sketch of $A$ is:
\begin{equation*}
X_{(A)}=\set{x_A^1,x_A^2,\ldots,x_A^m} \text{.}
\end{equation*} 
$X_{(B)}$ is computed in the same way. Then, the two sketches are used to estimate the Jaccard similarity of $A$ and $B$:
\begin{equation}
\label{eq:jacest}
\widehat{\rho(A,B)}=\frac{\sum_{j=1}^{m}I_{x_A^j==x_B^j}}{m} \text{,}
\end{equation}
where the indicator function $I_{x_A^j==x_B^j}$ is $1$ if $x_A^j=x_B^j$, and $0$ otherwise.
To shorten our notation, for the rest of the paper we use $\rho$ to indicate $\rho(A,B)$.

\begin{lemma} \label{lemma:jacard} \ \\
In Eq. (\ref{eq:jacest}), $\widehat{\rho}$ is a normally distributed random variable with mean $\rho$ and variance $\frac{1}{m}\rho(1-\rho)$; i.e., $\widehat{\rho} \to \Normal{\rho}{\frac{1}{m}\rho(1-\rho)}$.
\end{lemma}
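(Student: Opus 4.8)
The plan is to identify $m\widehat{\rho}$ as a binomial random variable and then invoke the Gaussian approximation of the binomial. The first step is the classical min/max-hash identity: for each fixed hash function $h_j$, $\Prob{x_A^j = x_B^j} = \rho$. Since $h_j$ maps the $u = \abs{A \cup B}$ distinct elements of $A \cup B$ independently and uniformly into $(0,1)$, with probability $1$ these $u$ values are pairwise distinct, so there is a unique element of $A \cup B$ attaining the maximum hash value, and by symmetry it is uniform over $A \cup B$. Now $x_A^j = x_B^j$ holds exactly when this overall-maximal element lies in $A \cap B$ (if it lies in $A \setminus B$ then $x_A^j > x_B^j$, and symmetrically for $B \setminus A$). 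Hence $\Prob{x_A^j = x_B^j} = n/u = \rho$; this is the same probability already used in the derivation of the Fisher information matrix in Section~\ref{sec:MLscheme}.

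The second step uses the independence of the $m$ hash functions: the indicators $I_{x_A^j==x_B^j}$ for $1 \le j \le m$ are i.i.d.\ Bernoulli$(\rho)$ variables, so $\sum_{j=1}^m I_{x_A^j==x_B^j}$ is $\operatorname{Binomial}(m,\rho)$. Dividing by $m$ and using the standard binomial moments gives $\Ex{\widehat{\rho}} = \frac{1}{m}\cdot m\rho = \rho$ and $\Var{\widehat{\rho}} = \frac{1}{m^2}\cdot m\rho(1-\rho) = \frac{1}{m}\rho(1-\rho)$, matching the claimed parameters.

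For the distributional claim, the third step is the de Moivre--Laplace / central limit theorem: as $m$ increases, $\paren{\widehat{\rho} - \rho}\big/\sqrt{\rho(1-\rho)/m}$ converges in distribution to a standard normal, so $\widehat{\rho}$ is well approximated by $\Normal{\rho}{\frac{1}{m}\rho(1-\rho)}$ --- which is what the ``$\to$'' in the statement records. I do not anticipate a genuine obstacle here; the only points needing care are stating that the normal law is asymptotic rather than exact (since $\widehat{\rho}$ is literally a rescaled binomial) and noting that a.s.\ distinctness of the $u$ hash values is what licenses the clean case analysis ``$x_A^j = x_B^j$ iff the maximizer is shared'' with no ties to worry about.
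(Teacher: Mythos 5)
Your proof is correct and follows essentially the same route as the paper's: the min-hash collision identity $\Prob{x_A^j = x_B^j} = \rho$, the observation that $m\widehat{\rho}$ is a sum of i.i.d.\ Bernoulli$(\rho)$ indicators and hence binomial, and the central limit theorem for the asymptotic normality. The only differences are that you spell out the almost-sure distinctness of the hash values and the explicit mean/variance computation, which the paper leaves implicit.
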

\begin{proof} \ \\
Consider the $j$-th hash function. According to \cite{Broder97new},
\begin{equation}
\label{eq:pro}
\Prob{x_A^j=x_B^j}=\frac{\abs{A \cap B}}{\abs{A \cup B}} \text{.}
\end{equation}
The intuition is to consider the hash function $h_j$ and define $m(S)$, for every set $S$, to be the element in $S$ with the maximum hash value of $h_j$, i.e., $h_j(m(S)) = x_S^j$. Then, we get $m(A)=m(B)$ only when $m(A \cup B)$ lies also in their intersection $A \cap B$. The probability of this is the Jaccard ratio $\rho$, and therefore $\Prob{x_A^j = x_B^j} = \rho$.

From Eqs. (\ref{eq:jacest}) and (\ref{eq:pro}) follows that $\widehat{\rho}$ is a sum of $m$ Bernoulli variables. Therefore, it is binomially distributed, and can be asymptotically approximated to normal distribution as $m\to \infty$; namely, $\widehat{\rho} = \frac{\sum_{l=1}^{m}I_{x_A^j=x_B^j}}{m} \to \Normal{\rho}{\frac{1}{m}\rho(1-\rho)}$.
\end{proof}

\subsubsection{The Cardinality Estimation Problem}

Algorithms for estimating the cardinality of set intersection use estimations of $\abs{A}$, $\abs{B}$, and $\abs{A \cup B}$. These estimations can be found using well-known algorithms for the following cardinality estimation problem:

\refstepcounter{problemC}
\begin{description}\label{prob:cardinality}
    \item[Instance:] A stream of elements $x_1, x_2, \ldots, x_s$ with repetitions. Let $c$ be the number of different elements, namely $c=\abs{\set{x_1, x_2, \ldots, x_s}}$.
    \item[Objective:] Find an estimate $\widehat{c}$ of $c$ using only $ m $ storage units, where $m \ll c$.
\end{description}

For the rest of the paper we consider the HyperLogLog algorithm \cite{Flajolet2007} for solving the above problem. 
As indicated in Section \ref{sec:related}, this algorithm has a very small standard error, of about $ 1.04 / \; \sqrt{m} $ where $ m $ is the number of storage units. The pseudo-code of this algorithm is as follows:

\begin{algorithm2} 
{The HyperLogLog algorithm for the cardinality estimation problem}
\label{alg:hyperLogLog}
\begin{enumerate}
\item Initialize $m$ registers: $C_1,C_2,\ldots,C_m$ to 0.
\item For each input element $x_i$ do:
\begin{enumerate}
\item Let $\rho= \floor{-\log_2\paren{h_1(x_i)}}$ be the leftmost 1-bit position of the hashed value.
\item Let $j=h_2(x_i)$ be the bucket for this element.
\item $C_j \leftarrow \max{\set{C_j,\rho}}$.
\end{enumerate}
\item To estimate the value of $n$ do:
\begin{enumerate}
\item $Z \leftarrow {(\sum_{j=1}^{m}2^{-C_j})}^{-1}$ is the harmonic mean of $2^{C_j}$.
\item return $\alpha_m m^2 Z$, where \\
$\alpha_m= \paren{m \int^{\infty}_{0}\paren{\log_2{\paren{\frac{2+u}{1+u}}}}^m \,du }^{-1}$.
\end{enumerate}
\end{enumerate}
\end{algorithm2}

The following lemma summarizes the statistical performance of Algorithm \ref{alg:hyperLogLog}:

\begin{lemma} \label{lemma:hyper} \ \\
For Algorithm \ref{alg:hyperLogLog}, 
$\widehat{c} \to \Normal{c}{\frac{c^2}{m}}$, where $c$ is the actual cardinality of the considered set, $\widehat{c}$ is the estimate computed by the algorithm, and $m$ is the number of storage units used by the algorithm. When Algorithm \ref{alg:hyperLogLog} is used with two sets $A$ and $B$, the following holds:
\begin{equation*} 
\widehat{\abs{A}} \to \Normal{\abs{A}}{\frac{\abs{A}^2}{m}} \text{,}
\end{equation*}
\begin{equation*}
\widehat{\abs{B}} \to \Normal{\abs{B}}{\frac{\abs{B}^2}{m}} \text{,}
\end{equation*}
\begin{equation} \label{eq:union}
\text{and} \:\:\:\:\: \widehat{\abs{A \cup B}} \to \Normal{\abs{A \cup B}}{\frac{\abs{A \cup B}^2}{m}} \text{.}
\end{equation}
\end{lemma}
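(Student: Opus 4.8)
The plan is to reduce the claim to the known analysis of the HyperLogLog estimator in \cite{Flajolet2007} and to upgrade that analysis to asymptotic normality using the Central Limit Theorem together with the delta method. First I would recall the internal structure of Algorithm~\ref{alg:hyperLogLog}: the hash $h_2$ spreads the $c$ distinct elements of the set across $m$ buckets, and register $C_j$ records the largest leftmost-$1$-bit position $\floor{-\log_2 h_1(x)}$ among the elements landing in bucket $j$. Conditioned on the bucket occupancies $(N_1,\ldots,N_m)$, which form a multinomial vector with parameters $c$ and $(1/m,\ldots,1/m)$, the registers $C_1,\ldots,C_m$ are independent, and $Y_j := 2^{-C_j}$ has a distribution depending only on $N_j$. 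Since each $N_j$ concentrates around $c/m$ as $c\to\infty$, the $Y_j$ behave like i.i.d.\ random variables with a common mean of order $m/c$ and a finite variance of the same order.

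Next I would apply the CLT to $W := \sum_{j=1}^{m} Y_j = Z^{-1}$, a sum of (conditionally) independent bounded-variance terms: as $m\to\infty$, $W$ is asymptotically normal with mean $\Ex{W}\sim\alpha_m m^2/c$ --- this relation is forced by the asymptotic unbiasedness proved in \cite{Flajolet2007}, since the returned estimate is $\widehat{c}=\alpha_m m^2/W$ --- and variance of order $m/c^2$. Because $\widehat{c}$ is a smooth function of $W$ away from $0$, the delta method transfers asymptotic normality to $\widehat{c}$, giving mean $c$ and variance $(\alpha_m m^2)^2\,\Var{W}/(\Ex{W})^4$, which is a constant multiple of $c^2/m$. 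That constant is precisely the squared relative standard error of HyperLogLog, shown in \cite{Flajolet2007} to converge to $(1.03896\ldots)^2\approx 1$; rounding it to $1$ yields $\widehat{c}\to\Normal{c}{c^2/m}$. Applying this with the streams $\widebar{A}$ and $\widebar{B}$ gives the first two displayed limits with $c=\abs{A}$ and $c=\abs{B}$.

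For Eq.~(\ref{eq:union}) I would invoke the mergeability of the HyperLogLog sketch. Running the algorithm on the concatenation of $\widebar{A}$ and $\widebar{B}$ yields, in bucket $j$, the register value $\max\set{C_j^A,C_j^B}$, since the largest leftmost-$1$-bit position over $A\cup B$ is the maximum of the two per-set values (with the convention that an empty bucket contributes $0$). Thus the sketch of $A\cup B$ is obtained coordinatewise from the already-stored sketches of $A$ and $B$, and the estimate computed from it is exactly the HyperLogLog estimate of a set of cardinality $\abs{A\cup B}$; the general statement then applies verbatim.

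The main obstacle is the dependence among the registers created by the multinomial allocation: the $C_j$ are only conditionally independent given $(N_1,\ldots,N_m)$, so a clean CLT requires controlling the fluctuations of the occupancies and checking that they do not perturb the limiting Gaussian law --- the standard device for this is Poissonization of the number of elements, as carried out in \cite{Flajolet2007}. A secondary technicality is the small-range correction built into the algorithm for registers stuck at $0$, which plays no role in the asymptotic regime $c\to\infty$ considered here but should be excluded explicitly. Finally, one should note that the stated variance $c^2/m$ is an idealized rounding of the true HyperLogLog relative variance $\approx 1.08/m$; if exactness matters, the lemma should carry the constant $\beta_\infty^2\approx 1.08$ throughout.
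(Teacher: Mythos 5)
The paper does not actually prove this lemma: its entire ``proof'' is the sentence ``The proof is given in \cite{Flajolet2007}.'' Your proposal therefore does more than the paper does --- it reconstructs, in outline, the analysis that the citation delegates to: the multinomial bucket allocation, the conditional independence of the registers, the CLT applied to the harmonic-mean denominator $W=\sum_j 2^{-C_j}$, the delta method to transfer normality to $\widehat{c}=\alpha_m m^2/W$, and the mergeability of the sketch to reduce the $\abs{A\cup B}$ case to the single-set case. That outline is faithful to the actual argument in \cite{Flajolet2007} (which indeed proceeds by Poissonization/depoissonization to tame the register dependence, exactly the obstacle you identify). Two remarks. First, there is a small order-of-magnitude slip: with $\Ex{Y_j}\sim m/c$ and $\Var{Y_j}\sim (m/c)^2$, the sum $W$ has variance of order $m^3/c^2$, not $m/c^2$; your final delta-method computation implicitly uses the correct order, since only $\Var{W}\sim m^3/c^2$ yields a variance proportional to $c^2/m$ for $\widehat{c}$. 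Second, your closing caveat is well taken and worth keeping: the lemma as stated suppresses the HyperLogLog constant, since the true asymptotic relative variance is $\beta_\infty^2/m$ with $\beta_\infty\approx 1.04$ (so roughly $1.08\,c^2/m$ rather than $c^2/m$), and this constant silently propagates into every downstream variance formula in Section \ref{sec:analysis}. In short, your route is the same one the paper implicitly relies on; you have simply written down the steps the paper omits, and flagged a genuine imprecision in the statement.
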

The proof is given in \cite{Flajolet2007}.

Let us also recall three general lemmas, not related to set intersection cardinality estimation. The first lemma, known as the Delta Method, allows us to compute the probability distribution for a function of an asymptotically normal estimator using the estimator's variance:
\begin{lemma}[Delta Method] \label{lemma:delta} \ \\
Let $\theta_1,\theta_2,\ldots,\theta_m$ be a sequence of $m$ random variables such that for every integer $i$ $\sqrt{i}(\theta_i - \theta) \to \Normal{0}{\sigma^2}$, where $\theta$ and $\sigma^2$ are finite valued constants. Then, for every integer $i$ and for every function $g$ for which $g^\prime(\theta)$ exists and $g^\prime(\theta) \neq 0$, the following holds:
\begin{equation*}
\sqrt{i}(g(\theta_i) - g(\theta)) \to \Normal{0}{\sigma^2 {g^\prime(\theta)}^2}\text{.}
\end{equation*}
\end{lemma}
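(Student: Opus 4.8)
The plan is to prove this by a first-order Taylor expansion of $g$ at $\theta$ combined with Slutsky's theorem. First I would extract consistency from the hypothesis: since $\sqrt{i}(\theta_i - \theta) \to \Normal{0}{\sigma^2}$ with $\sigma^2$ finite, the sequence $\sqrt{i}(\theta_i - \theta)$ is bounded in probability, so $\theta_i - \theta$ is of order $1/\sqrt{i}$ in probability, and in particular $\theta_i \to \theta$ in probability as $i \to \infty$.

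Next I would write the exact expansion with remainder. Set $r(x) = g(x) - g(\theta) - g^\prime(\theta)(x-\theta)$, with $r(\theta)/(\theta-\theta)$ interpreted as $0$; differentiability of $g$ at $\theta$ gives $r(x)/(x-\theta) \to 0$ as $x \to \theta$. Multiplying the identity $g(\theta_i) - g(\theta) = g^\prime(\theta)(\theta_i-\theta) + r(\theta_i)$ through by $\sqrt{i}$ yields
\[
\sqrt{i}\,\big(g(\theta_i) - g(\theta)\big) = g^\prime(\theta)\cdot\sqrt{i}\,(\theta_i - \theta) \;+\; \frac{r(\theta_i)}{\theta_i - \theta}\cdot\sqrt{i}\,(\theta_i - \theta).
\]

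I would then handle the two summands separately. By the hypothesis and the scaling behavior of weak convergence, $g^\prime(\theta)\cdot\sqrt{i}(\theta_i-\theta) \to \Normal{0}{\sigma^2 {g^\prime(\theta)}^2}$, where the assumption $g^\prime(\theta) \neq 0$ guarantees a non-degenerate limit. In the second summand, the factor $r(\theta_i)/(\theta_i - \theta) \to 0$ in probability (because $\theta_i \to \theta$ in probability and the deterministic function $x \mapsto r(x)/(x-\theta)$ vanishes in the limit at $\theta$), while $\sqrt{i}(\theta_i - \theta)$ is bounded in probability; hence the product tends to $0$ in probability. Slutsky's theorem then gives that adding this vanishing term does not alter the weak limit, so $\sqrt{i}(g(\theta_i) - g(\theta)) \to \Normal{0}{\sigma^2 {g^\prime(\theta)}^2}$, as claimed.

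The only delicate point, and hence the step I expect to be the main obstacle, is the rigorous treatment of the remainder: one must upgrade the purely analytic fact ``$r(x) = o(\abs{x - \theta})$ as $x \to \theta$'' to the probabilistic statement that $\sqrt{i}\,r(\theta_i) \to 0$ in probability, which requires combining the consistency $\theta_i \to \theta$ (itself obtained from tightness of a weakly convergent sequence) with the boundedness in probability of $\sqrt{i}(\theta_i - \theta)$ and then invoking Slutsky. The remaining ingredients — the scaling of the normal family and the continuity argument — are routine.
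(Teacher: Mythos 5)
Your proposal is correct and is precisely the standard first-order Taylor-plus-Slutsky argument; the paper itself does not prove this lemma but defers to the reference \cite{Shao1998}, where this same argument appears. The one point worth noting is that $g^\prime(\theta)\neq 0$ is not needed for the convergence itself, only to keep the limiting normal distribution non-degenerate, exactly as you observe.
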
 
A proof is given in \cite{Shao1998}.

The next lemma shows how to compute the probability distribution of a random variable that is a product of two normally distributed random variables whose covariance is $0$:
\begin{lemma}[Product distribution]\label{lemma:product} \ \\
Let $X$ and $Y$ be two random variables satisfying $X \to \Normal{\mu_x}{\sigma_x^2}$ and $Y \to \Normal{\mu_y}{\sigma_y^2}$, such that $\Cov{X}{Y}=0$.
Then, the product $X \cdot Y$ asymptotically satisfies the following:
\begin{equation*}
X\cdot Y \to \Normal{\mu_x\mu_y}{\mu_y^2\sigma_x^2 + \mu_x^2\sigma_y^2} \text{.}
\end{equation*}
\end{lemma}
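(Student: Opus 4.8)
The plan is to reduce the statement to the (bivariate) Delta Method, Lemma~\ref{lemma:delta}, applied to the map $g(x,y)=xy$. Concretely, I read $X$ and $Y$ as limits of sequences $X_i\to\Normal{\mu_x}{\sigma_x^2}$ and $Y_i\to\Normal{\mu_y}{\sigma_y^2}$, with $\sqrt{i}(X_i-\mu_x)$ and $\sqrt{i}(Y_i-\mu_y)$ converging \emph{jointly} to a centered bivariate normal whose off-diagonal entry is the limiting value of $\Cov{X}{Y}$, here $0$. The gradient of $g$ at $(\mu_x,\mu_y)$ is $(\mu_y,\mu_x)$, so the Delta Method predicts limiting variance $\mu_y^2\sigma_x^2+\mu_x^2\sigma_y^2+2\mu_x\mu_y\Cov{X}{Y}$, and the covariance hypothesis removes the last term, yielding exactly $\mu_y^2\sigma_x^2+\mu_x^2\sigma_y^2$.

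The execution I have in mind avoids quoting a multivariate Delta Method verbatim and instead makes the first-order expansion explicit. First I would write the exact identity
\[
X_i Y_i - \mu_x\mu_y \;=\; \mu_y\,(X_i-\mu_x) + \mu_x\,(Y_i-\mu_y) + (X_i-\mu_x)(Y_i-\mu_y),
\]
multiply through by $\sqrt{i}$, and note that $\sqrt{i}(X_i-\mu_x)(Y_i-\mu_y)=\tfrac{1}{\sqrt{i}}\cdot\big[\sqrt{i}(X_i-\mu_x)\big]\cdot\big[\sqrt{i}(Y_i-\mu_y)\big]$ is a product of two stochastically bounded sequences scaled by $1/\sqrt{i}\to 0$, hence it converges to $0$ in probability and is negligible by Slutsky's theorem. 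The surviving term $\sqrt{i}\big(\mu_y(X_i-\mu_x)+\mu_x(Y_i-\mu_y)\big)$ is a fixed linear combination of a jointly asymptotically normal pair, so it is itself asymptotically normal with mean $0$ and variance $\mu_y^2\sigma_x^2+\mu_x^2\sigma_y^2+2\mu_x\mu_y\Cov{X}{Y}$; plugging in $\Cov{X}{Y}=0$ and undoing the $\sqrt{i}$-scaling gives the claim $X_i Y_i\to\Normal{\mu_x\mu_y}{\mu_y^2\sigma_x^2+\mu_x^2\sigma_y^2}$.

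The main obstacle is not a calculation but a modeling subtlety: the lemma as stated supplies only the two marginal limits plus a covariance, whereas the argument needs \emph{joint} asymptotic normality of $(X_i,Y_i)$ (so that a linear combination is again normal and so that the cross term is controllable). I would handle this by making that joint-normality assumption explicit — it is the form in which the lemma is actually invoked later, where $X$ and $Y$ are asymptotically normal cardinality/Jaccard estimators — and by recording that under zero limiting covariance the joint limit has diagonal covariance, which is all the variance bookkeeping requires. A secondary, purely routine point is verifying the regularity hypothesis of Lemma~\ref{lemma:delta}: $g(x,y)=xy$ is differentiable everywhere with gradient $(y,x)$, which is nonzero at $(\mu_x,\mu_y)$ whenever that point is not the origin — and in every application here the relevant means are strictly positive cardinalities, so this is automatic.
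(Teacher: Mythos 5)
Your argument is correct, but note that the paper does not actually prove this lemma at all --- it simply cites \cite{Shao1998} --- so you are supplying a proof where the authors supply a reference. Your route is the standard one and it works: the exact decomposition $X_iY_i-\mu_x\mu_y=\mu_y(X_i-\mu_x)+\mu_x(Y_i-\mu_y)+(X_i-\mu_x)(Y_i-\mu_y)$, Slutsky to kill the cross term (a product of two $O_p(i^{-1/2})$ quantities), and the variance identity $\Var{aU+bV}=a^2\Var{U}+b^2\Var{V}+2ab\Cov{U}{V}$ with the cross term annihilated by the hypothesis $\Cov{X}{Y}=0$. What your write-up buys over the paper's citation is transparency about exactly which hypothesis is doing the work, and you are right to flag the one genuine gap in the lemma \emph{as stated}: two marginal asymptotic normal limits plus a vanishing covariance do not by themselves yield asymptotic normality of a linear combination --- \emph{joint} asymptotic normality of $(X_i,Y_i)$ is needed, and uncorrelatedness does not substitute for it outside the jointly Gaussian case. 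Making that assumption explicit is the honest fix, and it is harmless here because in every invocation (Theorems \ref{thm:n2} and \ref{thm:n3}) the pair consists of estimators that are jointly asymptotically normal with strictly positive means, so both the joint-normality requirement and the nondegeneracy of the gradient $(\mu_y,\mu_x)$ are satisfied. In short: your proof is sound, more informative than the paper's, and its only ``extra'' hypothesis is one the lemma implicitly relies on anyway.
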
 
A proof is given in \cite{Shao1998}.

The final lemma states the distribution of the maximal hash value.
Let us first recall the beta distribution.
$\Beta{\alpha}{\beta}$ is defined over the interval $(0,1)$ and has the following probability and cumulative density functions (PDF and CDF respectively):
\begin{equation*}\label{eq:pdf}
	f(x) = \frac{\Gamma \paren{\alpha+\beta}}{\Gamma \paren{\beta}\Gamma \paren{\alpha}} x^{\alpha-1} (1-x)^{\beta-1}
\end{equation*}
\begin{equation*}\label{eq:cdf}
	F(x) = \int_0^{x}{\frac{\Gamma \paren{\alpha+\beta}}{\Gamma \paren{\beta}\Gamma \paren{\alpha}} x^{\alpha-1} (1-x)^{\beta-1}}dx \text{,}
\end{equation*}
where $ \Gamma(z) $ is the gamma function, defined as $\int_{0}^{\infty}{e^{-t}t^{z-1} dt}$. Using integration by parts, the gamma function can be shown to satisfy $\Gamma(z+1) = z \cdot \Gamma(z)$. 
Combining this with $\Gamma(1) = 1$ yields that $\Gamma(n) = (n-1)!$ holds for every integer $n$.
Two other known beta identities are \cite{Krishnamoorthy2006}:
\begin{equation*}\label{eq:beta11}
    \Beta{1}{1} \sim \Unif{0}{1}
\end{equation*}
and
\begin{equation*}\label{eq:betaFlip}
     \Beta{\alpha}{\beta} \sim 1- \Beta{\beta}{\alpha} \text{.}
\end{equation*}

The following lemma presents some key properties of the beta distribution, which we will use in the analysis.
\begin{lemma} \label{lemma:maxHash} \ \\
Let $x_1,x_2, \ldots ,x_n$ be independent RVs, where $x_i\sim\Unif{0}{1}$. Then,
\begin{enumerate}
\item[(a)] $X=\max_{i=1}^{n}{x_i}\sim\Beta{n}{1}$.
\item[(b)] $X$ satisfies the following
\begin{enumerate}
\item[(1)] $\Ex{X} = \frac{n}{n+1}$; and
\item[(2)] $\Var{X}= \frac{n}{(n+1)^2(n+2)}$.
\end{enumerate}
\end{enumerate}
\end{lemma}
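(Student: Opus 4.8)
The plan is to prove part (a) by a direct computation of the cumulative distribution function of $X$, and then to obtain part (b) either by integrating the resulting density or by invoking the standard moment formulas for the beta distribution.

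For part (a), I would start from the fact that, by independence, for any $w \in (0,1)$,
$\Prob{X \le w} = \Prob{x_1 \le w, \dots, x_n \le w} = \prod_{i=1}^{n}\Prob{x_i \le w} = w^{n}$,
since each $x_i \sim \Unif{0}{1}$. Differentiating in $w$ gives the density $f_X(w) = n\,w^{n-1}$ on $(0,1)$. I would then compare this with the beta density recalled just before the lemma, specialized to parameters $n$ and $1$: using $\Gamma(n+1) = n!$, $\Gamma(n) = (n-1)!$ and $\Gamma(1) = 1$, the $\Beta{n}{1}$ density equals $\frac{\Gamma(n+1)}{\Gamma(1)\Gamma(n)}\,w^{n-1}(1-w)^{0} = n\,w^{n-1}$, which is exactly $f_X$. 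Hence $X \sim \Beta{n}{1}$.

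For part (b), with $f_X(w) = n\,w^{n-1}$ in hand the moments are elementary: $\Ex{X} = \int_0^1 w\cdot n\,w^{n-1}\,dw = \frac{n}{n+1}$ and $\Ex{X^2} = \int_0^1 w^2\cdot n\,w^{n-1}\,dw = \frac{n}{n+2}$. Then $\Var{X} = \Ex{X^2} - \Ex{X}^2 = \frac{n}{n+2} - \frac{n^2}{(n+1)^2}$, and putting this over the common denominator $(n+1)^2(n+2)$ the numerator collapses to $n\big[(n+1)^2 - n(n+2)\big] = n$, which yields $\Var{X} = \frac{n}{(n+1)^2(n+2)}$. As a sanity check, the same values follow from the textbook identities $\Ex{\Beta{\alpha}{\beta}} = \frac{\alpha}{\alpha+\beta}$ and $\Var{\Beta{\alpha}{\beta}} = \frac{\alpha\beta}{(\alpha+\beta)^2(\alpha+\beta+1)}$ with $\alpha = n$, $\beta = 1$, now justified by part (a).

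I do not expect a genuine obstacle here: the only step deserving explicit care is the passage from ``maximum of $n$ i.i.d.\ uniforms'' to ``a $\Beta{n}{1}$ random variable,'' which is why I would write out the CDF argument in full rather than merely asserting it; the remainder is routine algebra, and the two independent derivations of part (b) guard against arithmetic slips.
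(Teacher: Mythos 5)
Your proof is correct, and it follows the standard route the paper implicitly relies on: part (a) via the CDF computation $\Prob{X \le w} = w^n$ matched against the $\Beta{n}{1}$ density, and part (b) from the resulting moments. The paper itself only cites a reference for (a) and notes that (b) follows from the beta distribution, so your argument simply supplies in full the details the paper delegates to that citation.
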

A proof for (a) is given in \cite{CYK}; the other equalities follow the beta distribution of $X$.

\subsection{Analysis of Scheme-1}
Scheme-1 estimates the cardinality of $A \cap B$ using the inclusion-exclusion principle:
\begin{equation*} 
\widehat{\abs{A \cap B}} = \widehat{\abs{A}} + \widehat{\abs{B}} - \widehat{\abs{A \cup B}} \text{.}
\end{equation*}
Let $\widehat{n_{1}}$ be the estimator found by Scheme-1. 
The following theorem summarizes its statistical performance.

\begin{theorem} \label{thm:n1} \ \\
$\frac{\widehat{n_{1}}}{n} \to \Normal{1}{\frac{1}{m n^2}(u^2 - a^2 - b^2) - \frac{2a\cdot b}{m \cdot u \cdot n} + \frac{2 u \cdot (a^2(b^2 + \alpha\beta) + b^2(a^2 + \alpha\beta))}{m\cdot Z \cdot n}}$, where $m$ is the number of storage units, and $Z$ satisfies:
\begin{equation*}
Z = {\alpha\cdot n}(a^2 + \alpha\beta) + \beta \cdot n(b^2 + \alpha\beta) + (a^2 + \alpha\beta)(b^2 + \alpha\beta)\text{.}
\end{equation*}
\end{theorem}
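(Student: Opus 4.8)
The estimator $\widehat{n_1} = \widehat{\abs{A}} + \widehat{\abs{B}} - \widehat{\abs{A\cup B}}$ is a fixed linear combination of three cardinality estimates, so the plan is to (i) establish that the triple $(\widehat{\abs{A}}, \widehat{\abs{B}}, \widehat{\abs{A\cup B}})$ is jointly asymptotically normal, (ii) read off the asymptotic mean and variance of $\widehat{n_1}$ from those of the triple, and (iii) divide by $n$ and simplify. For (ii), joint normality of the triple makes $\widehat{n_1}$ asymptotically normal; its mean is $a + b - u = n$ by the inclusion--exclusion identity $u = a + b - n$, and its variance is
\begin{equation*}
\Var{\widehat{\abs{A}}} + \Var{\widehat{\abs{B}}} + \Var{\widehat{\abs{A\cup B}}} + 2\Cov{\widehat{\abs{A}}}{\widehat{\abs{B}}} - 2\Cov{\widehat{\abs{A}}}{\widehat{\abs{A\cup B}}} - 2\Cov{\widehat{\abs{B}}}{\widehat{\abs{A\cup B}}}\text{.}
\end{equation*}
Hence $\widehat{n_1}/n \to \Normal{1}{\sigma^2/n^2}$ with $\sigma^2$ the quantity above, and the remaining task is to evaluate it and match the claimed formula.

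The three variances are handed to us by Lemma \ref{lemma:hyper}: they equal $a^2/m$, $b^2/m$, $u^2/m$. The heart of the proof is the three covariances, which are non-zero because all sketches use the same hash functions, the sets share their $n$ common elements, and the union sketch is the coordinatewise maximum of the $A$- and $B$-sketches. The route I would take is to split each set into the disjoint blocks $A\cap B$, $A\setminus B$, $B\setminus A$, of sizes $n$, $\alpha$, $\beta$. For one hash function the three block-maxima are independent, with the Beta laws of Lemma \ref{lemma:maxHash}; $\widehat{\abs{A}}$ is a smooth function of the $A\cap B$- and $A\setminus B$-maxima, $\widehat{\abs{B}}$ of the $A\cap B$- and $B\setminus A$-maxima, and $\widehat{\abs{A\cup B}}$ of all three. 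Running the per-hash joint law through the cardinality-estimator map, applying the (multivariate) Delta Method of Lemma \ref{lemma:delta}, and using the CLT across the $m$ independent hash functions gives simultaneously the joint asymptotic normality of the triple and each covariance as an expression of order $1/m$ in $a,b,n,\alpha,\beta,u$.

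Substituting the covariances into $\sigma^2$, dividing by $n^2$, and then eliminating $\alpha,\beta,u$ via $a = n+\alpha$, $b = n+\beta$, $u = n+\alpha+\beta$ (equivalently $\alpha\beta = ab - nu$) reduces the statement to an algebraic identity. Putting the resulting rational expression over a common denominator is what produces the quantity $Z = \alpha n(a^2 + \alpha\beta) + \beta n(b^2 + \alpha\beta) + (a^2 + \alpha\beta)(b^2 + \alpha\beta)$ -- the same $Z$ that arose in the Cramer--Rao computation -- and separates $\sigma^2/n^2$ into the three displayed summands.

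The main obstacle is the off-diagonal covariance computation: unlike the variances, these are not supplied by any earlier result, and getting them right means being scrupulous about which randomness is shared (common hash functions, common intersection elements, and the algebraic dependence of the union sketch on the other two). Checking the regularity hypotheses of the Delta Method and upgrading the marginal normality of the cardinality estimators to joint normality is a secondary technical point, and the final consolidation into the form with denominator $Z$ is long but purely mechanical.
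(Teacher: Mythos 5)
Your outer skeleton matches the paper exactly: mean from inclusion--exclusion, variance of the linear combination expanded into the three variances (from Lemma \ref{lemma:hyper}) plus the three covariances, then division by $n^2$. Your plan for $\Cov{\widehat{a}}{\widehat{b}}$ --- split into the disjoint blocks $A\cap B$, $A\setminus B$, $B\setminus A$, compute the joint law of the per-hash maxima, and push it through the multivariate Delta Method --- is precisely what the paper does in its Appendix (Lemma \ref{lem:covAandB}), yielding $\Cov{x_A}{x_B}=\frac{n}{(u+2)(a+1)(b+1)}$ and hence $\Cov{\widehat{a}}{\widehat{b}}=\frac{nab}{mu}$. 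So far, so good.

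The gap is in the remaining two covariances. You assert that the same direct sketch-level computation of $\Cov{\widehat{a}}{\widehat{u}}$ and $\Cov{\widehat{b}}{\widehat{u}}$ will, after ``purely mechanical'' consolidation, produce the denominator $Z$. It will not. The paper obtains these terms by a completely different mechanism (Lemma \ref{lem:covAandU}): it reads $\Cov{\widehat{a}}{\widehat{n}}$ off the entry $(\mathbb{F}^{-1})_{1,3}$ of the inverse Fisher information matrix --- that inversion is the sole source of $Z$ --- and then writes $\Cov{\widehat{a}}{\widehat{u}}=\Cov{\widehat{a}}{\widehat{a}+\widehat{b}-\widehat{n}}$. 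A direct computation along your lines gives something else: with $x_U=\max(x_A,x_B)$ one finds
\begin{equation*}
\Ex{x_A x_U}=\frac{n+\alpha}{u+2}+\frac{a}{a+1}\cdot\frac{\beta}{u+2}=\frac{a(u+1)}{(a+1)(u+2)},\qquad
\Cov{x_A}{x_U}=\frac{a}{(a+1)(u+1)(u+2)},
\end{equation*}
and the Delta Method then yields $\Cov{\widehat{a}}{\widehat{u}}\approx\frac{a^2}{m}$, with no $Z$-dependent term. This disagrees with the paper's $\frac{a^2}{m}+\frac{nab}{mu}-\frac{una^2(b^2+\alpha\beta)}{mZ}$ in general (e.g.\ take $\alpha=\beta=n$, so $a=b=2n$, $u=3n$: the two extra terms do not cancel), so your route lands on a different final variance than the one stated in the theorem. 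To prove the theorem as stated you must either reproduce the Fisher-information argument for $\Cov{\widehat{a}}{\widehat{n}}$ (and justify applying an asymptotic ML covariance to the Scheme-1 estimators, which is itself a nontrivial step the paper glosses over), or accept that the direct computation gives a genuinely different --- and incompatible --- expression. Declaring the consolidation ``mechanical'' hides the one step that actually carries the content of the result.
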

\begin{proof} \ \\
For the expectation, from Lemma \ref{lemma:hyper} follows that:
\begin{align*}
\Ex{\widehat{n_1}} &= \Ex{\widehat{a}} + \Ex{\widehat{b}} - \Ex{\widehat{u}} = \\
&= a + b - u = n \text{.}
\end{align*}
The first equality is due to the definition of Scheme-1 and the expectation properties, and the second equality is due to Lemma \ref{lemma:hyper}. Thus, the estimator is unbiased. For the variance, Lemma \ref{lem:covAandB} in the Appendix proves that $\Cov{a}{b} = \frac{n \cdot a \cdot b}{m \cdot u}$, and Lemma \ref{lem:covAandU} in the Appendix proves that $\Cov{a}{u} = \frac{a^2}{m} + \frac{n \cdot a \cdot b}{ m \cdot u} - \frac{u \cdot n \cdot a^2(b^2 + \alpha\beta)}{m\cdot Z}$ and $\Cov{b}{u} = \frac{b^2}{m} + \frac{n \cdot a \cdot b}{ m \cdot u} - \frac{u \cdot n \cdot b^2(a^2 + \alpha\beta)}{m \cdot Z}$, where $Z = {\alpha\cdot n}(a^2 + \alpha\beta) + \beta \cdot n(b^2 + \alpha\beta) + (a^2 + \alpha\beta)(b^2 + \alpha\beta)$. We get that:
\begin{align}
\Var{\widehat{n_1}} &= \Var{\widehat{a}} + \Var{\widehat{b}} + \Var{\widehat{u}} + 2\Cov{\widehat{a}}{\widehat{b}} - 2\Cov{\widehat{a}}{\widehat{u}} - 2\Cov{\widehat{b}}{\widehat{u}} = \nonumber \\ 
&= \frac{a^2}{m} + \frac{b^2}{m} + \frac{u^2}{m} + 2\cdot\frac{n \cdot a \cdot b}{m \cdot u} \nonumber \\ 
&- 2 \cdot \frac{a^2}{m} - 2\cdot \frac{n \cdot a \cdot b}{ m \cdot u} + 2 \cdot\frac{u \cdot n \cdot a^2(b^2 + \alpha\beta)}{m\cdot Z} \nonumber \\
&- 2 \cdot \frac{b^2}{m} - 2\cdot \frac{n \cdot a \cdot b}{ m \cdot u} + 2\cdot\frac{u \cdot n \cdot b^2(a^2 + \alpha\beta)}{m \cdot Z} = \nonumber \\
&= \frac{1}{m}(u^2 - a^2 - b^2) - \frac{2n \cdot a \cdot b}{m \cdot u} + \frac{2 u \cdot n \cdot (a^2(b^2 + \alpha\beta) + b^2(a^2 + \alpha\beta))}{m\cdot Z} \text{.}
\end{align}
The first equality is due to variance properties and because $a$, $b$ and $u$ are dependent ($\Cov{a}{b}$, $\Cov{a}{u}$ and $\Cov{b}{u}$ are all $\neq 0$). The second equality is due to Lemma \ref{lemma:hyper}, Lemma \ref{lem:covAandB} and Lemma \ref{lem:covAandU} (both are in the Appendix). The third equality is due to algebraic manipulations. Finally, after dividing by $n^2$ we get the result.
\end{proof}

\subsection{Analysis of Scheme-2}
Scheme-2 estimates the cardinality of $A \cap B$ by estimating the Jaccard similarity $\rho$ and $\abs{A\cup B}$ \cite{Beyer07}:
\begin{equation*} 
\widehat{\abs{A \cap B}} = \widehat{\rho}\cdot \widehat{\abs{A \cup B}} \text{.}
\end{equation*}
Let $\widehat{n_{2}}$ be the estimator found by Scheme-2.
The following theorem summarizes its statistical performance.

\begin{theorem} \label{thm:n2} \ \\
$\frac{\widehat{n_{2}}}{n} \to \Normal{1}{\frac{1}{m\rho}}$, where $m$ is the number of storage units.
\end{theorem}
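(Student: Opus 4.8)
The plan is to view $\widehat{n_2}=\widehat{\rho}\cdot\widehat{u}$ (with $\widehat{u}=\widehat{\abs{A\cup B}}$) as a product of two asymptotically normal, independent estimators, apply the product-distribution lemma (Lemma \ref{lemma:product}) to obtain its limiting distribution, and then rescale by the constant $n$.

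Concretely, I would proceed as follows. First, I would recall the two ingredients already in hand: Lemma \ref{lemma:jacard} gives $\widehat{\rho}\to\Normal{\rho}{\frac{1}{m}\rho(1-\rho)}$, and Lemma \ref{lemma:hyper} gives $\widehat{u}\to\Normal{u}{\frac{u^2}{m}}$. Second, I would argue that $\Cov{\widehat{\rho}}{\widehat{u}}=0$: the similarity estimator of Eq. (\ref{eq:jacest}) is a function only of the min/max-hash sketches, whereas $\widehat{u}$ is produced by the cardinality-estimation routine of Algorithm \ref{alg:hyperLogLog} using its own, independent family of hash functions (the model adopted in step 1 of Algorithm \ref{alg:MLscheme}); conditioned on the fixed sets $A,B$, the two statistics are functions of independent random variables, hence uncorrelated. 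Third, unbiasedness is then immediate: $\Ex{\widehat{n_2}}=\Ex{\widehat{\rho}}\cdot\Ex{\widehat{u}}=\rho\cdot u=n$, using $\rho=n/u$. Fourth, I would invoke Lemma \ref{lemma:product} with $\mu_x=\rho,\ \sigma_x^2=\frac{1}{m}\rho(1-\rho),\ \mu_y=u,\ \sigma_y^2=\frac{u^2}{m}$ to obtain
\begin{equation*}
\widehat{n_2}=\widehat{\rho}\cdot\widehat{u}\to\Normal{\rho u}{u^2\cdot\tfrac{1}{m}\rho(1-\rho)+\rho^2\cdot\tfrac{u^2}{m}}=\Normal{n}{\tfrac{\rho u^2}{m}}\text{,}
\end{equation*}
where the last step uses $\rho(1-\rho)+\rho^2=\rho$. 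Finally, dividing the asymptotically normal variable $\widehat{n_2}$ by the constant $n=\rho u$ scales its mean to $1$ and its variance by $1/n^2=1/(\rho u)^2$, yielding $\frac{\widehat{n_2}}{n}\to\Normal{1}{\frac{\rho u^2}{m(\rho u)^2}}=\Normal{1}{\frac{1}{m\rho}}$, which is the claim.

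I expect the only subtle point to be the justification that $\Cov{\widehat{\rho}}{\widehat{u}}=0$; everything else is a direct substitution into Lemmas \ref{lemma:jacard}, \ref{lemma:hyper} and \ref{lemma:product} together with the identity $n=\rho u$. Note that this situation differs from Scheme-1, where $\widehat{a},\widehat{b},\widehat{u}$ are all produced by the same cardinality estimator on nested sets and therefore have nonzero pairwise covariances; for Scheme-2 the similarity sketch and the union-cardinality sketch are genuinely independent, which is precisely what makes the resulting variance collapse to the clean value $\frac{1}{m\rho}$.
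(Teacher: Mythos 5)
Your proposal is correct and follows essentially the same route as the paper: decompose $\widehat{n_2}=\widehat{\rho}\cdot\widehat{u}$, feed Lemma \ref{lemma:jacard} and Lemma \ref{lemma:hyper} into Lemma \ref{lemma:product}, simplify via $\rho(1-\rho)+\rho^2=\rho$ and $n=\rho u$, and rescale by $n$. The only difference is that you explicitly address the zero-covariance hypothesis of Lemma \ref{lemma:product}, which the paper applies without comment; that is a point in your favor rather than a gap.
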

\begin{proof} \ \\
From the definition of Scheme-2, $\widehat{n_{2}}=\widehat{\rho}\cdot \widehat{u}$.
From Lemma \ref{lemma:jacard} and Lemma \ref{lemma:hyper} follows that:
\begin{enumerate}
\item $\widehat{\rho} \to \Normal{\rho}{\frac{1}{m}\rho(1-\rho)}$.
\item $\widehat{u} \to \Normal{u}{\frac{u^2}{m}}$.
\end{enumerate} 

Applying Lemma \ref{lemma:product} for the expectation yields:
\begin{equation*}
\Ex{\widehat{n_{2}}} = \Ex{\widehat{\rho}\cdot \widehat{u}} = \rho\cdot u = n \text{.}
\end{equation*}
Therefore, the estimator is unbiased. For the variance, applying again Lemma \ref{lemma:product} yields:
\begin{align*}
\Var{\widehat{n_{2}}} &= u^2\cdot\frac{1}{m}\rho(1-\rho) +  \rho^2\cdot\frac{u^2}{m} \\
&= \frac{1}{m}(u\cdot n - n^2 + n^2) \\
&= \frac{n^2}{m \rho} \text{,}
\end{align*}
where all the equalities are due to Lemma \ref{lemma:product} and algebraic manipulations. Finally, after dividing by $n^2$, we get $\frac{\widehat{n_{2}}}{n} \to \Normal{1}{\frac{1}{m \rho}}$.
\end{proof}

\subsection{Analysis of Scheme-3}
Scheme-3 estimates the cardinality of $A \cap B$ by estimating the Jaccard similarity $\rho$, $\abs{A}$, and $\abs{B}$ \cite{Dasu02}:
\begin{equation} 
\label{eq:sc3}
\widehat{\abs{A \cap B}} = \widehat{\frac{\rho}{\rho + 1}}(\widehat{\abs{A}} + \widehat{\abs{B}}) \text{.}
\end{equation}
Let $\widehat{n_{3}}$ be the estimator found by Scheme-3.
We will use the following lemma in the analysis.

\begin{lemma} \ \\
$\widehat{a} + \widehat{b} \to \Normal{a+b}{\frac{1}{m}(a^2 + b^2 + 2a b \rho)}$, where $m$ is the number of storage units.
\end{lemma}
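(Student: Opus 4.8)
The plan is to assemble the lemma from the marginal distributions of $\widehat{a}$ and $\widehat{b}$ supplied by Lemma \ref{lemma:hyper} together with their covariance. By Lemma \ref{lemma:hyper}, $\widehat{a} \to \Normal{a}{\frac{a^2}{m}}$ and $\widehat{b} \to \Normal{b}{\frac{b^2}{m}}$. Both estimators are computed from the same $m$-dimensional sketches $X_{(A)}$ and $X_{(B)}$ produced by Algorithm \ref{alg:hyperLogLog}, so they are \emph{not} independent, and the variance of their sum must include a nonzero covariance term.

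First I would record the expectation: by linearity, $\Ex{\widehat{a} + \widehat{b}} = \Ex{\widehat{a}} + \Ex{\widehat{b}} = a + b$, so the sum is unbiased. Next I would expand $\Var{\widehat{a} + \widehat{b}} = \Var{\widehat{a}} + \Var{\widehat{b}} + 2\Cov{\widehat{a}}{\widehat{b}}$. The first two terms are $\frac{a^2}{m}$ and $\frac{b^2}{m}$ from Lemma \ref{lemma:hyper}; for the cross term I would invoke Lemma \ref{lem:covAandB} in the Appendix (the same lemma already cited in the proof of Theorem \ref{thm:n1}), which gives $\Cov{\widehat{a}}{\widehat{b}} = \frac{n \cdot a \cdot b}{m \cdot u}$. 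Substituting the Jaccard identity $\rho = \rho(A,B) = \frac{\abs{A \cap B}}{\abs{A \cup B}} = \frac{n}{u}$ rewrites this as $\Cov{\widehat{a}}{\widehat{b}} = \frac{a b \rho}{m}$, and collecting terms yields $\Var{\widehat{a} + \widehat{b}} = \frac{1}{m}(a^2 + b^2 + 2 a b \rho)$, exactly the claimed variance.

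The only non-routine point is justifying that $\widehat{a} + \widehat{b}$ is asymptotically \emph{normal}, not merely that it has the stated mean and variance. For this I would argue that the pair $(\widehat{a}, \widehat{b})$ is jointly asymptotically normal: each coordinate is a smooth function of the per-bucket register contents of Algorithm \ref{alg:hyperLogLog}, which are built from $m$ independent hash functions, so a bivariate central limit theorem (equivalently a bivariate version of the Delta Method in the spirit of Lemma \ref{lemma:delta}) applies to $(\widehat{a}, \widehat{b})$. Any fixed linear combination of a jointly asymptotically normal vector is then itself asymptotically normal with the induced mean and variance, which combined with the bookkeeping above gives $\widehat{a} + \widehat{b} \to \Normal{a+b}{\frac{1}{m}(a^2 + b^2 + 2 a b \rho)}$.

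I expect the main obstacle to be precisely this joint-normality step: Lemma \ref{lemma:hyper} and the Delta Method as stated here are one-dimensional, so one must either cite a multivariate analogue or re-derive asymptotic joint normality of $(\widehat{a}, \widehat{b})$ directly from the independence of the $m$ hash functions. Once that is in place, the mean, variance, and covariance computations are immediate.
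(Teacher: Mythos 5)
Your proposal is correct and follows essentially the same route as the paper: linearity of expectation for the mean, the variance-of-a-sum expansion with the covariance $\Cov{\widehat{a}}{\widehat{b}} = \frac{nab}{mu}$ from Lemma~\ref{lem:covAandB}, and the substitution $\rho = n/u$. The joint-normality point you raise is a genuine gap the paper leaves implicit in this proof, but it is covered by the bivariate Delta Method argument in the appendix proof of Lemma~\ref{lem:covAandB}, which establishes the asymptotic joint normality of $(\widehat{a},\widehat{b})$ exactly as you suggest.
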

\begin{proof}\ \\
For the expectation,
\begin{equation} \label{eq:exaplusb}
\Ex{\widehat{a}+\widehat{b}} = \Ex{\widehat{a}} + \Ex{\widehat{b}} = a + b \text{.}
\end{equation}
The first equality is due to expectation properties, and the second is due to Lemma \ref{lemma:hyper}. 
For the variance, Lemma \ref{lem:covAandB} in the Appendix proves that $\Cov{a}{b} = \frac{n \cdot a \cdot b}{m \cdot u}$. Thus,
\begin{align} \label{eq:varaplusb}
\Var{\widehat{a}+\widehat{b}} &= \nonumber \\ 
&= \Var{\widehat{a}} + \Var{\widehat{b}} + 2\Cov{\widehat{a}}{\widehat{b}} = \nonumber \\
&= \frac{a^2}{m} + \frac{b^2}{m} + 2\cdot\frac{n \cdot a \cdot b}{m \cdot u} = \nonumber \\
&= \frac{1}{m}(a^2+b^2+2a b \rho) \text{.}
\end{align}
The first equality is due to variance properties and because $a$ and $b$ are dependent ($\Cov{a}{b} \neq 0$). The second equality is due to Lemma \ref{lemma:hyper} and Lemma \ref{lem:covAandB}, and the third equality is due to algebraic manipulations and the Jaccard similarity definition ($\rho = \frac{n}{u}$).
Combining Eqs. \ref{eq:exaplusb} and \ref{eq:varaplusb} yields that
\begin{equation} \label{eq:term2}
\widehat{a} + \widehat{b} \to \Normal{a+b}{\frac{1}{m}(a^2 + b^2 + 2a b \rho)} \text{.}
\end{equation}
\end{proof}

The following theorem summarizes the statistical performance of $\widehat{n_3}$.

\begin{theorem} \label{thm:n3} \ \\
$\frac{\widehat{n_{3}}}{n} \to \Normal{1}{\frac{1}{m} (1 + \frac{2 a b}{u(a+b)} + (\alpha + \beta) \frac{u^2}{n (a+b)^2} )}$, where $m$ is the number of storage units.
\end{theorem}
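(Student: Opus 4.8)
My plan follows the same template used for Scheme-1 and Scheme-2: express $\widehat{n_{3}}$ as a product of two asymptotically normal quantities, obtain the distribution of each factor, and then combine them. Writing $g(x)=\frac{x}{x+1}$, Eq.~(\ref{eq:sc3}) reads $\widehat{n_{3}}=g(\widehat{\rho})\cdot(\widehat{a}+\widehat{b})$, where the first factor depends only on the Jaccard sketches and the second is exactly the quantity whose distribution was established in the lemma immediately preceding this theorem, namely $\widehat{a}+\widehat{b}\to\Normal{a+b}{\frac{1}{m}(a^{2}+b^{2}+2ab\rho)}$.

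First I would pin down the distribution of $g(\widehat{\rho})$. Since $g'(x)=\frac{1}{(x+1)^{2}}$ is finite and nonzero at $x=\rho$, applying the Delta Method (Lemma~\ref{lemma:delta}) to $\widehat{\rho}\to\Normal{\rho}{\frac{1}{m}\rho(1-\rho)}$ (Lemma~\ref{lemma:jacard}) gives $g(\widehat{\rho})\to\Normal{\frac{\rho}{\rho+1}}{\frac{1}{m}\cdot\frac{\rho(1-\rho)}{(\rho+1)^{4}}}$. I would then rewrite both moments in terms of $a,b,n,u,\alpha,\beta$ using the identities $n+u=a+b$, $u-n=\alpha+\beta$ and $\rho=\frac{n}{u}$ (hence $\rho+1=\frac{a+b}{u}$ and $\rho(1-\rho)=\frac{n(\alpha+\beta)}{u^{2}}$): the mean collapses to $\frac{n}{a+b}$ and the variance to $\frac{n(\alpha+\beta)u^{2}}{m(a+b)^{4}}$.

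Next I would combine the two factors. For the expectation, Lemma~\ref{lemma:product} gives $\Ex{\widehat{n_{3}}}=\frac{n}{a+b}\cdot(a+b)=n$, so the estimator is unbiased. For the variance, the route that mirrors the Scheme-2 proof is to invoke Lemma~\ref{lemma:product} once more, obtaining $\Var{\widehat{n_{3}}}=(a+b)^{2}\cdot\frac{n(\alpha+\beta)u^{2}}{m(a+b)^{4}}+\frac{n^{2}}{(a+b)^{2}}\cdot\frac{a^{2}+b^{2}+2ab\rho}{m}$; dividing by $n^{2}$ and simplifying the second summand with $a=n+\alpha$, $b=n+\beta$, $u=n+\alpha+\beta$, $\rho=\frac{n}{u}$ should produce the three terms $1$, $\frac{2ab}{u(a+b)}$ and $(\alpha+\beta)\frac{u^{2}}{n(a+b)^{2}}$ of the claimed variance, the last one coming from the $g(\widehat{\rho})$-variance contribution.

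The step I expect to be the main obstacle is justifying the use of Lemma~\ref{lemma:product}, i.e., that $g(\widehat{\rho})$ and $\widehat{a}+\widehat{b}$ may be treated as uncorrelated. They are both functions of the same maximum-hash sketches $\{x_{A}^{k}\},\{x_{B}^{k}\}$, and conditioning on a match $x_{A}^{k}=x_{B}^{k}$ forces that value to be the maximum over $A\cup B$, which is stochastically larger than a generic maximum over $A$ or over $B$ — so $\widehat{\rho}$ is in fact positively correlated with $\widehat{a}$ and $\widehat{b}$. If this covariance is nonzero one must instead use the bivariate product formula $\Var{XY}=\mu_{Y}^{2}\Var{X}+\mu_{X}^{2}\Var{Y}+2\mu_{X}\mu_{Y}\Cov{X}{Y}$ with $\Cov{g(\widehat{\rho})}{\widehat{a}+\widehat{b}}=g'(\rho)\paren{\Cov{\widehat{\rho}}{\widehat{a}}+\Cov{\widehat{\rho}}{\widehat{b}}}$, the two cross-covariances being computed in the Appendix in the spirit of Lemma~\ref{lem:covAandB}; checking that this extra term is exactly what turns the naive expression into the stated one, and then carrying out the lengthy but routine algebraic consolidation, is the real work.
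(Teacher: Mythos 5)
Your setup is exactly the paper's: the same factorization $\widehat{n_{3}}=\widehat{\frac{\rho}{\rho+1}}\cdot(\widehat{a}+\widehat{b})$, the same Delta-Method step yielding Eq.~(\ref{eq:term1}), the same use of the preceding lemma for $\widehat{a}+\widehat{b}$, and the same product-lemma argument for unbiasedness; all of that is fine. The gap is in the variance. Your middle paragraph asserts that the zero-covariance product formula $(a+b)^{2}\sigma_{x}^{2}+\paren{\frac{n}{a+b}}^{2}\sigma_{y}^{2}$ ``should produce'' the three stated terms, but it does not: since $a^{2}+b^{2}+2ab\rho=(a+b)^{2}-2ab(1-\rho)=(a+b)^{2}-\frac{2ab(\alpha+\beta)}{u}$, the second summand equals $\frac{n^{2}}{m}\paren{1-\frac{2ab(\alpha+\beta)}{u(a+b)^{2}}}$, whereas the theorem's first two terms are $\frac{n^{2}}{m}\paren{1+\frac{2ab}{u(a+b)}}$. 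These are not equal for any $\alpha+\beta>0$ (and already at $A=B$ the naive route gives normalized variance $\frac{1}{m}$ versus the theorem's $\frac{2}{m}$), so the ``routine algebraic consolidation'' you defer cannot succeed as described.

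You correctly diagnose the underlying issue in your last paragraph --- $\widehat{\rho}$ and $\widehat{a}+\widehat{b}$ are built from the same max-hash sketches, so the hypothesis $\Cov{X}{Y}=0$ of Lemma~\ref{lemma:product} fails and a cross term $2\mu_{x}\mu_{y}\Cov{\widehat{\frac{\rho}{\rho+1}}}{\widehat{a}+\widehat{b}}$ must be added --- but you never compute that covariance, and without it the claimed variance is never derived. That deferred computation is not a finishing touch; it is the step on which the entire variance expression hinges. (For what it is worth, the paper's own proof is no more detailed here: it also simply ``applies Lemma~\ref{lemma:product} again'' and writes down the final expression without addressing the nonzero covariance, so you have faithfully reproduced the paper's argument including its weakest step. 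But judged as a self-contained proof, the attempt establishes unbiasedness and the $(\alpha+\beta)\frac{u^{2}}{n(a+b)^{2}}$ term, and leaves the $1+\frac{2ab}{u(a+b)}$ part unproved.)
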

\begin{proof} \ \\
From the definition of Scheme-3, $\widehat{n_{3}}=\widehat{\frac{\rho}{\rho + 1}}(\widehat{a} + \widehat{b})$.
From Lemma \ref{lemma:jacard} follows that:
\begin{equation*}
\widehat{\rho} \to \Normal{\rho}{\frac{1}{m}\rho(1-\rho)} \text{.}
\end{equation*}
Applying Lemma \ref{lemma:delta} on $\frac{\rho}{\rho+1}$ yields:
\begin{equation} \label{eq:term1}
\widehat{\frac{\rho}{\rho + 1}} \to \Normal{\frac{\rho}{\rho + 1}}{\frac{1}{m}\rho(1-\rho)\frac{1}{(1+\rho)^4}} \text{.}
\end{equation}
Applying Lemma \ref{lemma:product} for Eqs. (\ref{eq:term1}) and (\ref{eq:term2}) yields:
\begin{equation*}
\Ex{\widehat{n_{3}}} = \Ex{\widehat{\frac{\rho}{\rho + 1}}\cdot \widehat{a+b}} = \frac{\rho}{\rho + 1}\cdot(a+b) = n \text{.}
\end{equation*}
Therefore, the estimator is unbiased. For the variance, applying again Lemma \ref{lemma:product} yields:
\begin{align*}
\Var{\widehat{n_{3}}} = \frac{1}{m} (n^2 + \frac{2 a b {n}^2}{u(a+b)} + (\alpha + \beta) \frac{{u}^2 {n}}{(a+b)^2} ) \text{.}
\end{align*}
Finally, after dividing by $n^2$, we get:
$\frac{\widehat{n_{3}}}{n} \to \Normal{1}{\frac{1}{m} (1 + \frac{2 a b}{u(a+b) } + (\alpha + \beta) \frac{{u}^2}{n (a+b)^2} )}$.
\end{proof}

A simple comparison yields that $\Var{\widehat{n_{2}}} > \Var{\widehat{n_{3}}}$, i.e., Scheme-3 outperforms Scheme-2. However, a similar comparison between Scheme-1 and Scheme-3 cannot be done, because neither one is always better than the other.


\section{Simulation Results} \label{sec:sim}

In this section we examine the performance of our new ML estimator and show that it indeed outperforms the three known schemes. We implemented all four schemes, and simulated two sets, $A$ and $B$, whose cardinalities are as follows:
\begin{enumerate}
\item $\abs{A}=a=10^6$;
\item $\abs{B} = a \cdot f$, where $f>0$;
\item $\abs{A \cap B} = a \cdot \alpha$, where $0\leq\alpha \leq 1$.
\end{enumerate}

We estimate $\widehat{\abs{A \cap B}}$ for each of the four schemes, for $f \in \set{1,5,10}$, and for $\alpha \in \set{0,0.01,0.02,\ldots,0.98,0.99,1}$. We repeat the test for $10,000$ different sets. Thus, for each of the four schemes, and for each $f$ and $\alpha$ values, we get a vector of $10,000$ different estimations. Then, for each $f$ and $\alpha$ values, we compute the variance and bias of this vector, and view the result as the variance and bias of the estimator (for the specific $f$ and $\alpha$ values). Each such computation is represented by one point in the graph. Let $v_{f,\alpha}=(\widehat{n}_1,\ldots,\widehat{n}_{10^4})$ be the vector of estimations for a specific scheme and for specific $f$ and $\alpha$ values. Let $\mu = \frac{1}{10^4}\sum_{i=1}^{10^4}{\widehat{n}_i}$, be the mean of $v_{f,\alpha}$.
The bias and variance of $v_{f,\alpha}$ are computed as follows:
\begin{equation*}
\text{Bias}(v_{f,\alpha})=\abs{\frac{1}{n}(\mu - n)}
\end{equation*}
and
\begin{equation*}
\Var{v_{f,\alpha}}=\frac{1}{10^4}\sum_{i=1}^{10^4}{(\widehat{n}_i-\mu)^2}.
\end{equation*}

Figure \ref{fig:figBias} presents the bias of the ML estimator for $f=1$, different $\alpha$ values, and two values of $m$: $m=1,000$ and $m=10,000$ (recall that $m$ is the number of hash values used for the estimations of $\widehat{\abs{A}}$, $\widehat{\abs{B}}$ and the Jaccard similarity $\widehat{\rho}$).
We can see that the bias is very small for all $\alpha$ and $m$ values. We got very similar results for bigger $f$ values as well.

\begin{figure*}[tbp]
\begin{center}
\begin{tabular}{ccc}
\epsfxsize=0.45\textwidth  \epsffile{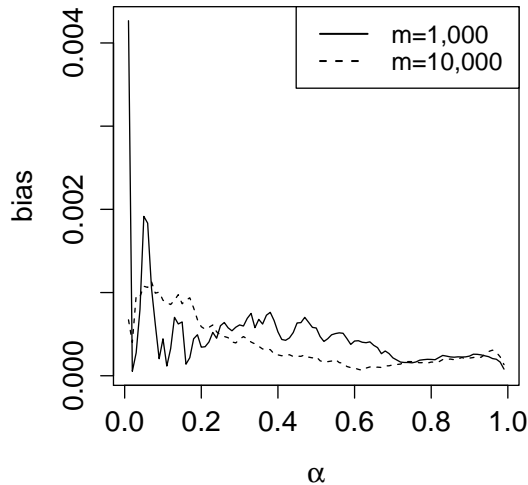}
\end{tabular}
\end{center}
\caption{The bias of the new ML scheme for $f=1$ and for different $\alpha$ values, for $m=1,000$ and $m=10,000$}
    \label{fig:figBias}
\end{figure*}

Figure \ref{fig:VarMLA} presents the normalized variance ($\Var{\frac{\widehat{n}}{n}}$) of our ML estimator for different $f$ and $\alpha$ values, and for $m \in \set{100,500,1000,10000}$. As expected, the normalized variance decreases as the number of hash values ($m$) increases, or as $\alpha$ increases. Overall, the normalized variance is very small for all values of $\alpha$, $f$ and $m$, indicating that the new scheme is very precise.

\begin{figure*}[tbp]
\begin{center}
\begin{tabular}{ccc}
\epsfxsize=0.3\textwidth  \epsffile{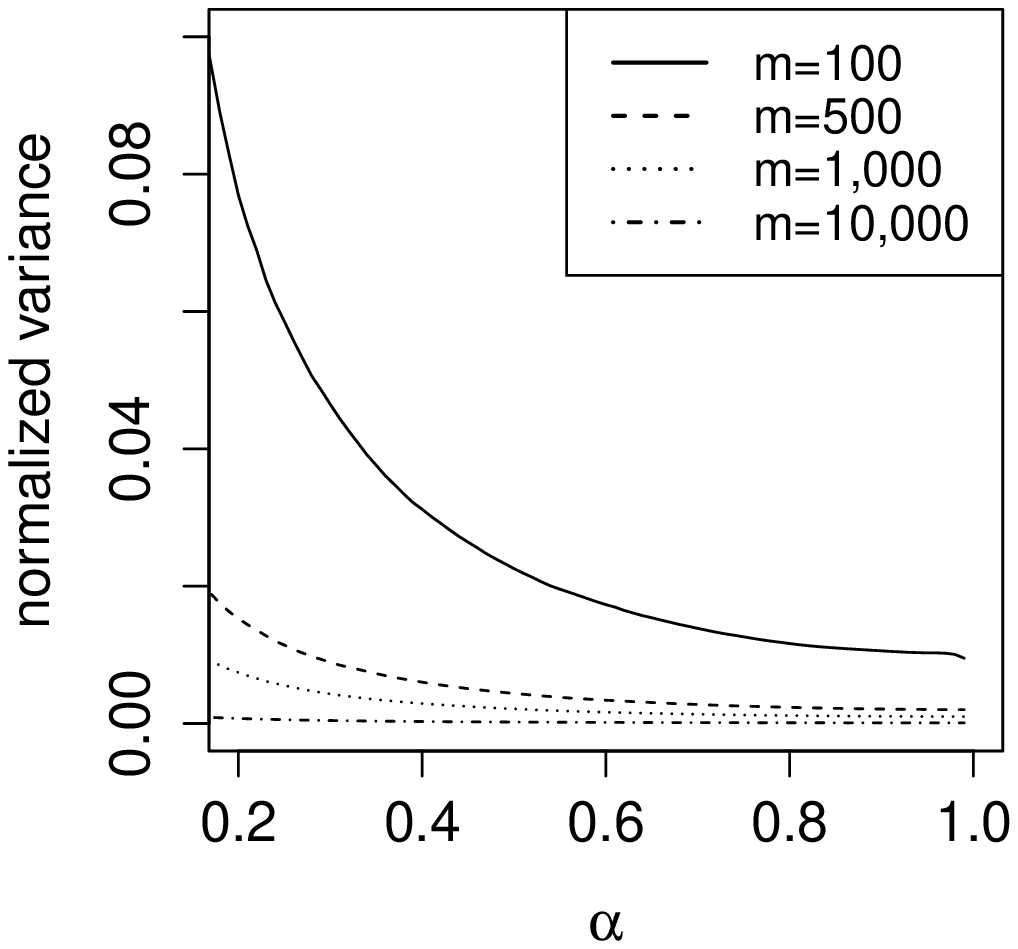}& \epsfxsize=0.3\textwidth  \epsffile{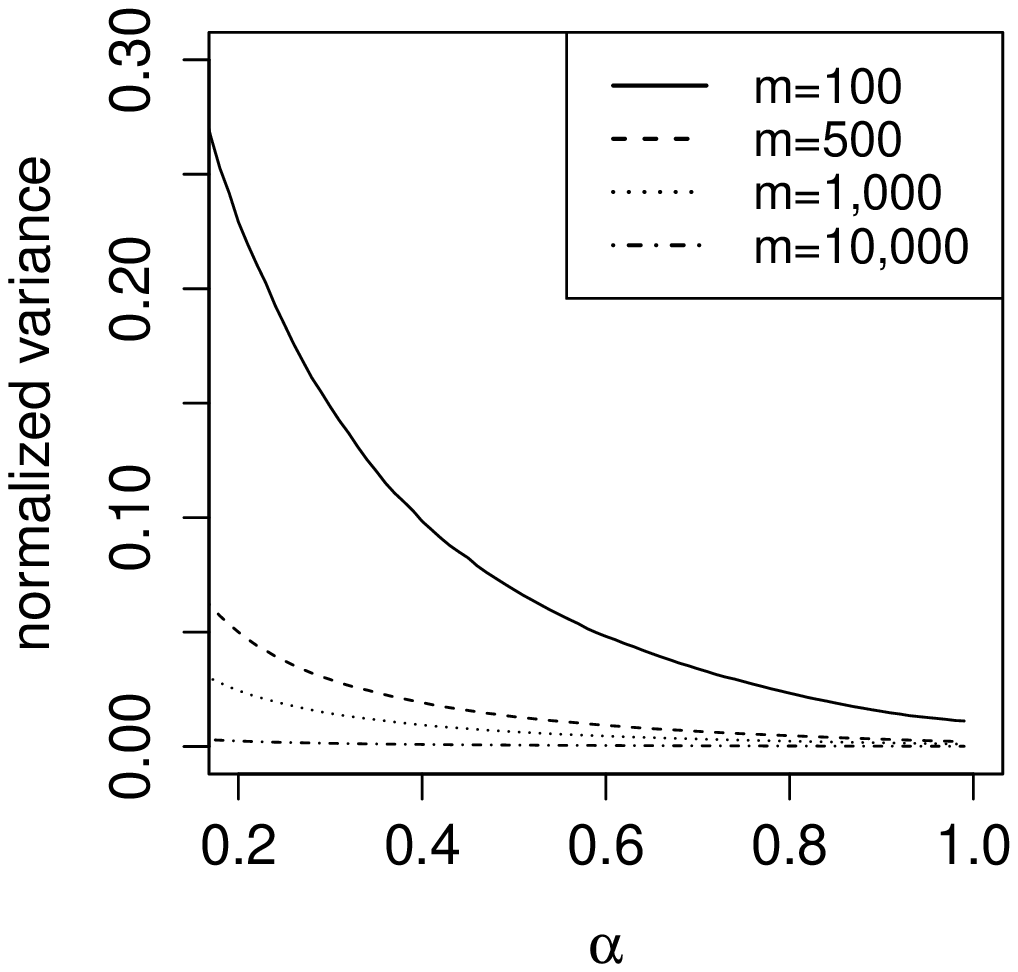} &
\epsfxsize=0.3\textwidth  \epsffile{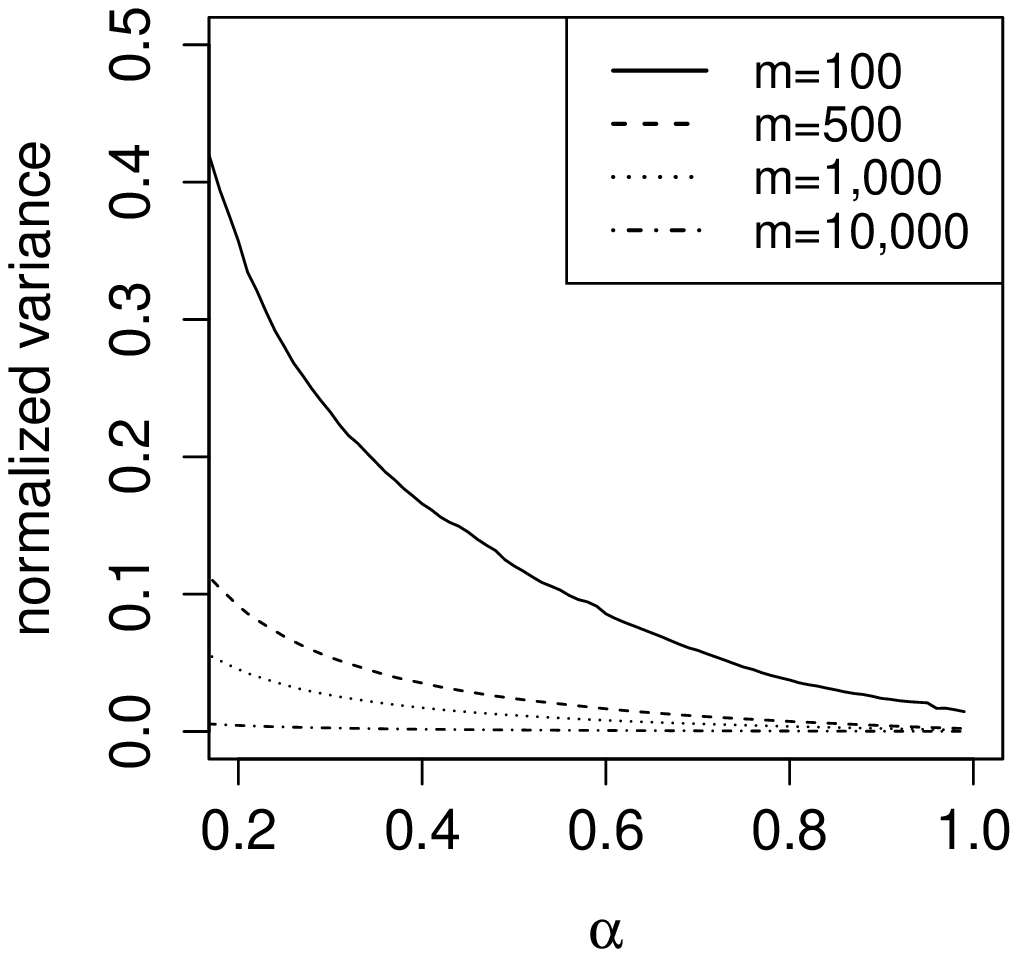}\\
{(a) $f=1$} & {(b) $f=5$} & {(c) $f=10$}
\end{tabular}
\end{center}
\caption{The normalized variance of the new ML scheme for different values of $f$, $\alpha$ and $m$}
    \label{fig:VarMLA}
\end{figure*}

After showing that the new ML scheme indeed yields good results, we now compare its performance to that of Schemes 1-3.
When comparing the statistical performance of two algorithms, it is common to look at their MSE (mean squared error) or RMSE, where $\text{MSE} = (\text{Bias}(\widehat{\theta}))^2 + \Var{\widehat{\theta}}$, and $\text{RMSE} = \sqrt{(\text{Bias}(\widehat{\theta}))^2 + \Var{\widehat{\theta}}}$. 
In our case, because all the estimators are unbiased, we compare only their variance. We define the ``relative variance improvement'' of the new ML scheme over each scheme as
\begin{equation*}
\frac{\Var{\widehat{\theta_i}} - \Var{\widehat{\theta_{\emph{ML}}}}}{\Var{\widehat{\theta_i}}} \text{,}
\end{equation*}
where $\theta_i$ is the estimator of the $i$th scheme, and $\theta_{\emph{ML}}$ is the new ML estimator.

Figure \ref{fig:fig1} presents the simulation results for two values of $m$: $m=10,000$ (upper graphs) and $m=1,000$ (lower graphs).
\textbf{We can see that the new ML estimator outperforms the three schemes for all values of $\alpha$ and $f$, and for both values of $m$.}
This improvement varies between $100 \%$ to a few percent.

\begin{figure*}[tbp]
\begin{center}
\begin{tabular}{ccc}
\epsfxsize=0.3\textwidth  \epsffile{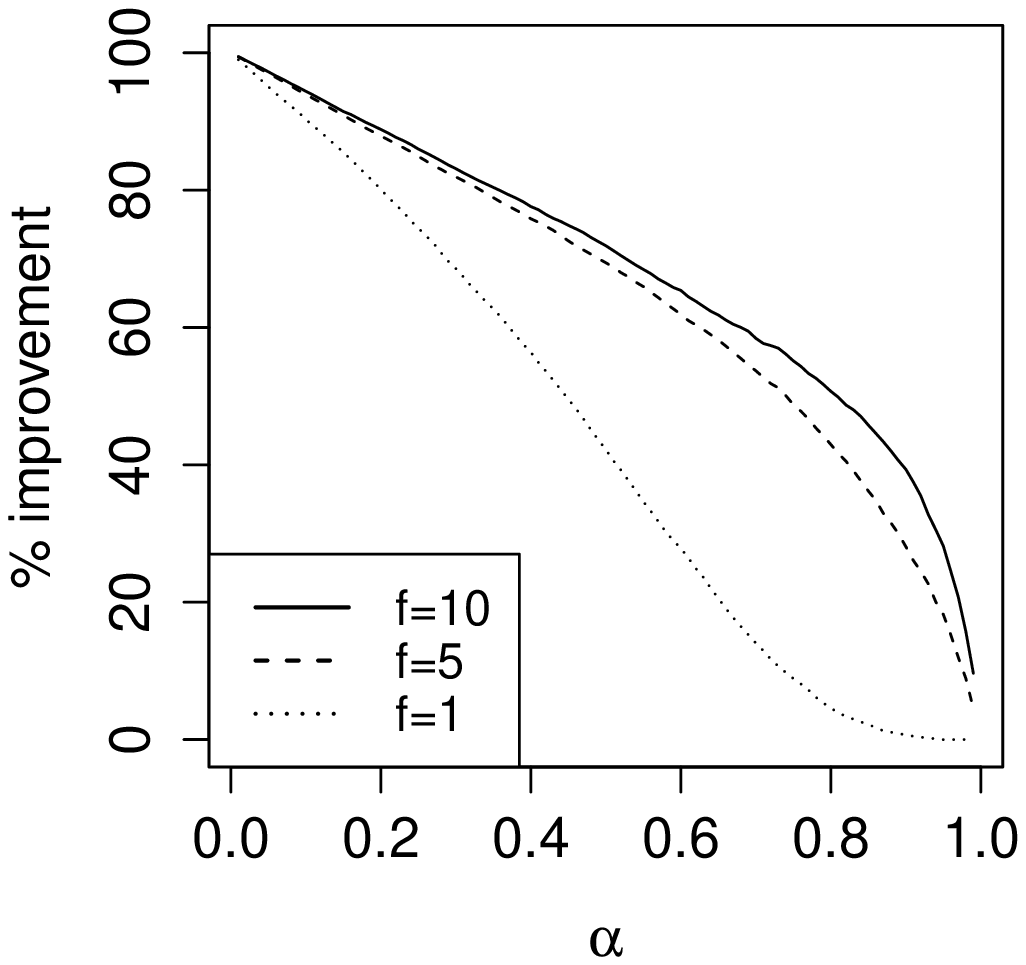}& \epsfxsize=0.3\textwidth  \epsffile{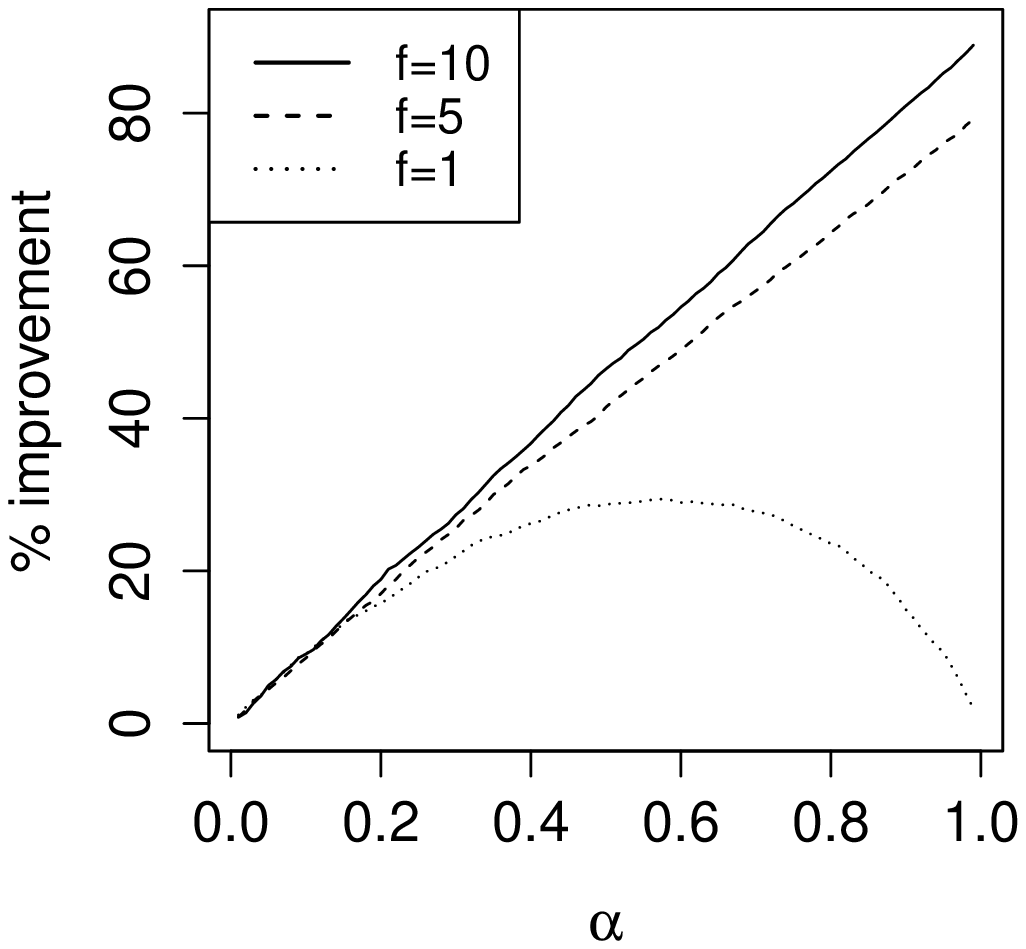} &
\epsfxsize=0.3\textwidth  \epsffile{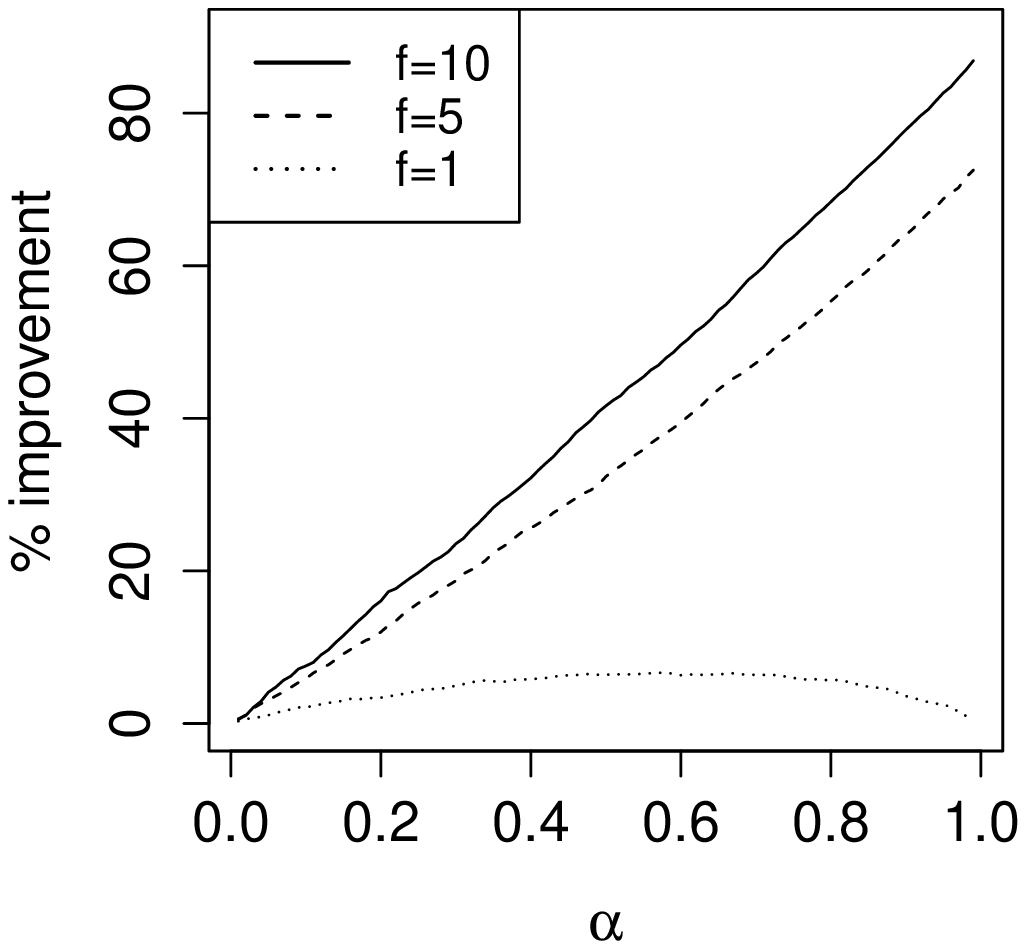}\\ 
\epsfxsize=0.3\textwidth  \epsffile{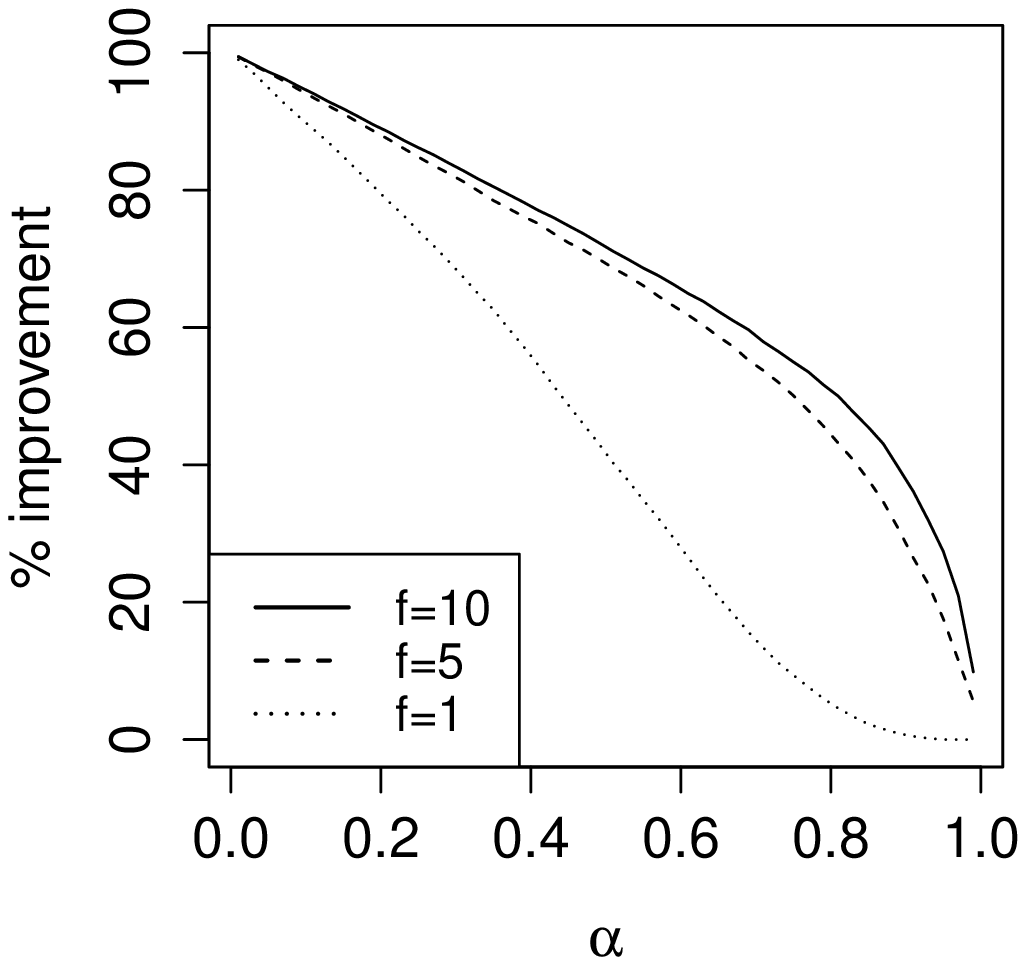}& \epsfxsize=0.3\textwidth  \epsffile{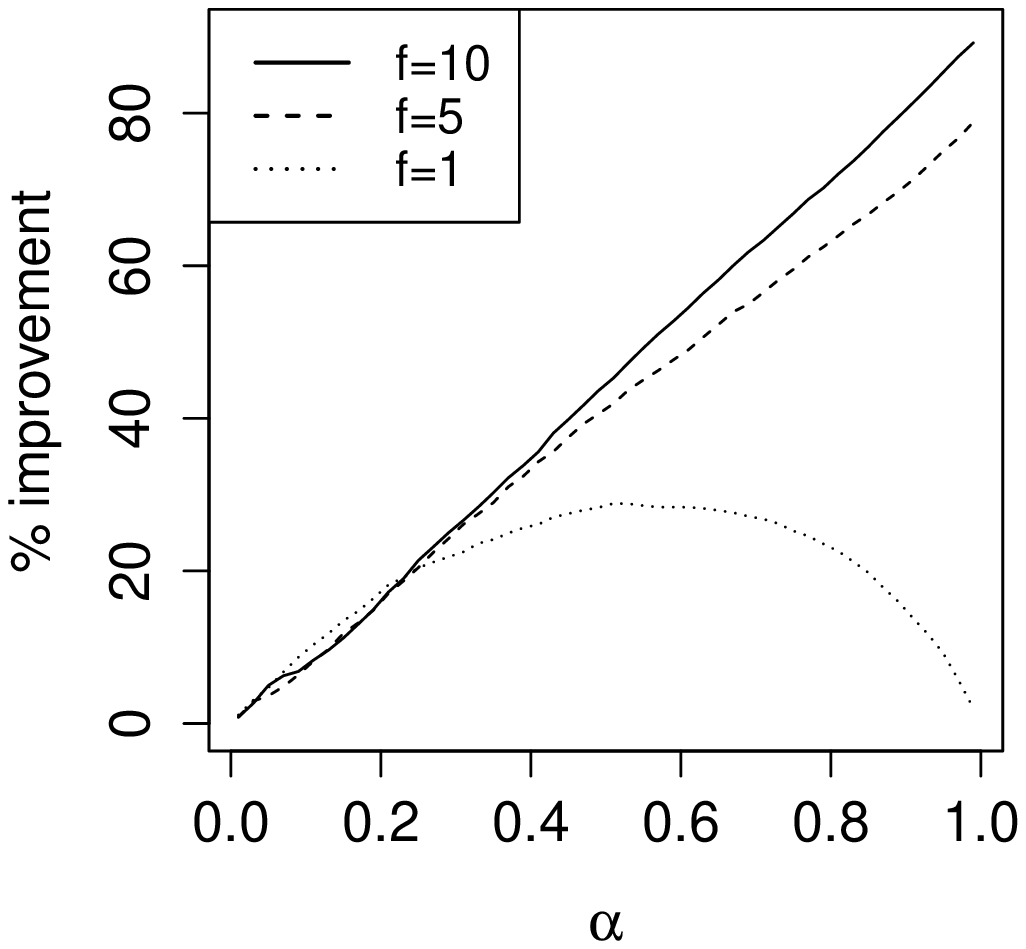} &
\epsfxsize=0.3\textwidth  \epsffile{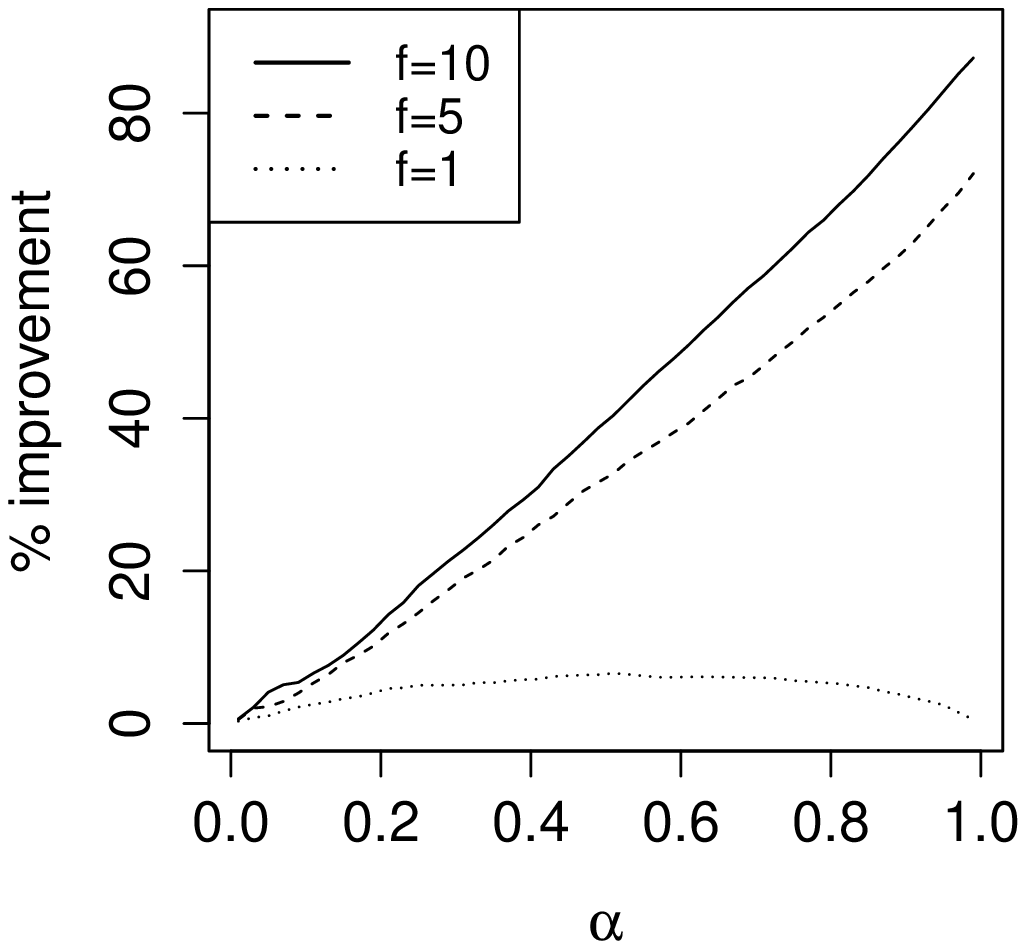}\\
{(a) Scheme-1} & {(b) Scheme-2} & {(c) Scheme-3}
\end{tabular}
\end{center}
\caption{The percentage of variance improvement of the new ML scheme over each of the other schemes for different values of $f$ and $\alpha$; the upper graphs are for $m=10,000$ and the lower graphs are for $m=1,000$}
    \label{fig:fig1}
\end{figure*}


\section{Conclusion} \label{sec:conclusion}
In this paper we studied the problem of estimating the number of distinct elements in the set intersection of two streams. We presented a complete analysis of the statistical performance (bias and variance) of three previously known schemes. We then computed the likelihood function of the problem and used it to present a new estimator, based on the ML method. We also found the optimal variance of any unbiased set intersection estimator, which is asymptotically achieved by our new ML scheme. 
We can conclude that our new scheme outperforms the three known schemes, significantly improves the variance (precision) of the estimator, and yields better results than the three previously known schemes.

\bibliographystyle{abbrv}
\bibliography{completePaper}

\begin{thebibliography}{10}

\bibitem{Bauckmann07}
J.~Bauckmann, U.~Leser, F.~Naumann, and V.~Tietz.
\newblock Efficiently detecting inclusion dependencies.
\newblock In {\em ICDE}, pages 1448--1450, 2007.

\bibitem{Beyer07}
K.~S. Beyer, P.~J. Haas, B.~Reinwald, Y.~Sismanis, and R.~Gemulla.
\newblock On synopses for distinct-value estimation under multiset operations.
\newblock In {\em SIGMOD Conference}, pages 199--210, 2007.

\bibitem{Broder97new}
A.~Z. Broder.
\newblock On the resemblance and containment of documents.
\newblock In {\em IEEE Compression and Complexity of Sequences 1997}, pages
  21--29.

\bibitem{Broder00}
A.~Z. Broder.
\newblock Identifying and filtering near-duplicate documents.
\newblock In {\em CPM}, pages 1--10, 2000.

\bibitem{Broder97}
A.~Z. Broder, S.~C. Glassman, M.~S. Manasse, and G.~Zweig.
\newblock Syntactic clustering of the web.
\newblock {\em Computer Networks}, 29(8-13):1157--1166, 1997.

\bibitem{Chassaing2006}
P.~Chassaing and L.~G\'erin.
\newblock Efficient estimation of the cardinality of large data sets.
\newblock In {\em Proceedings of the 4th Colloquium on Mathematics and Computer
  Science}, volume~AG of {\em Discrete Mathematics {\&} Theoretical Computer
  Science Proceedings}, pages 419--422, 2006.

\bibitem{Cosma:2011}
P.~Clifford and I.~A. Cosma.
\newblock A statistical analysis of probabilistic counting algorithms.
\newblock {\em Scandinavian Journal of Statistics}, 2011.

\bibitem{Clifton97}
C.~Clifton, E.~Housman, and A.~Rosenthal.
\newblock Experience with a combined approach to attribute-matching across
  heterogeneous databases.
\newblock In {\em {DS-7}}, pages 428--451, 1997.

\bibitem{CohenK08}
E.~Cohen and H.~Kaplan.
\newblock Tighter estimation using bottom k sketches.
\newblock {\em PVLDB}, 1(1):213--224, 2008.

\bibitem{CYK}
R.~Cohen, A.~Yehezkel, and L.~Katzir.
\newblock {A Unified Scheme for Generalizing Cardinality Estimators to Sum
  Aggregation}.
\newblock {\em Inf. Process. Lett.}, 2014.

\bibitem{Dasu02}
T.~Dasu, T.~Johnson, S.~Muthukrishnan, and V.~Shkapenyuk.
\newblock Mining database structure; or, how to build a data quality browser.
\newblock In {\em SIGMOD Conference}, pages 240--251, 2002.

\bibitem{Flajolet2007}
P.~Flajolet, {\'E}.~Fusy, O.~Gandouet, and F.~Meunier.
\newblock Hyperloglog: the analysis of a near-optimal cardinality estimation
  algorithm.
\newblock In {\em Analysis of Algorithms (AofA) 2007}. DMTCS.

\bibitem{Flajolet:1985}
P.~Flajolet and G.~N. Martin.
\newblock Probabilistic counting algorithms for data base applications.
\newblock {\em J. Comput. Syst. Sci.}, 31:182--209, Sep. 1985.

\bibitem{Giroire2007}
{\'E}.~Fusy and F.~Giroire.
\newblock Estimating the number of active flows in a data stream over a sliding
  window.
\newblock In D.~Panario and R.~Sedgewick, editors, {\em ANALCO}, pages
  223--231. SIAM, 2007.

\bibitem{Ganguly2007}
S.~Ganguly, M.~N. Garofalakis, R.~Rastogi, and K.~K. Sabnani.
\newblock Streaming algorithms for robust, real-time detection of ddos attacks.
\newblock In {\em ICDCS}, page~4. IEEE Computer Society, 2007.

\bibitem{Giroire2009}
F.~Giroire.
\newblock Order statistics and estimating cardinalities of massive data sets.
\newblock {\em Discrete Applied Mathematics}, 157:406--427, 2009.

\bibitem{Hernandez98}
M.~A. Hern{\'{a}}ndez and S.~J. Stolfo.
\newblock Real-world data is dirty: Data cleansing and the merge/purge problem.
\newblock {\em Data Min. Knowl. Discov.}, 2(1):9--37, 1998.

\bibitem{Kohler10}
H.~K{\"o}hler.
\newblock Estimating set intersection using small samples.
\newblock In {\em ACSC}, pages 71--78, 2010.

\bibitem{Krishnamoorthy2006}
K.~Krishnamoorthy.
\newblock {\em Handbook of Statistical Distributions with Applications}.
\newblock Chapman \& Hall/CRC Press, Boca Raton, FL, 2006.

\bibitem{Lumbroso2010}
J.~Lumbroso.
\newblock An optimal cardinality estimation algorithm based on order statistics
  and its full analysis.
\newblock In {\em Analysis of Algorithms (AofA) 2010}. DMTCS.

\bibitem{Metwally:2008}
A.~Metwally, D.~Agrawal, and A.~E. Abbadi.
\newblock Why go logarithmic if we can go linear?: Towards effective distinct
  counting of search traffic.
\newblock In {\em Proceedings of the 11th International Conference on Extending
  Database Technology: Advances in Database Technology}, EDBT '08, pages
  618--629.

\bibitem{Monge00}
A.~E. Monge.
\newblock Matching algorithms within a duplicate detection system.
\newblock {\em {IEEE} Data Eng. Bull.}, 23(4):14--20, 2000.

\bibitem{Rahm00}
E.~Rahm and H.~H. Do.
\newblock Data cleaning: Problems and current approaches.
\newblock {\em {IEEE} Data Eng. Bull.}, 23(4):3--13, 2000.

\bibitem{Shao1998}
J.~Shao.
\newblock {\em {Mathematical Statistics}}.
\newblock Springer, 2003.

\end{thebibliography}


\appendix

\section*{Appendix}

\begin{lemma} \label{lem:covAandB}\ \\
The covariance of $\widehat{\abs{A}}$ and $\widehat{\abs{B}}$ satisfies $\Cov{\widehat{a}}{\widehat{b}} = \frac{n \cdot a \cdot b}{m \cdot u}$, where $m$ is the number of hash functions used for the estimation of $\widehat{a}$ and $\widehat{b}$.
\end{lemma}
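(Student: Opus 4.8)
The plan is to isolate the single channel through which $\widehat{a}$ and $\widehat{b}$ become correlated — the hash values of the shared elements $A\cap B$ — and to compute that correlation on the logarithmic scale, where it becomes elementary. Since the $m$ hash functions are independent, both estimators are (to leading order; see the last paragraph) functions of per-hash-function averages, so $\Cov{\widehat{a}}{\widehat{b}}$ is $1/m$ times the contribution of a single hash function, which is the origin of the $1/m$ factor. Fix $h_k$ and split $A=(A\setminus B)\cup(A\cap B)$ and $B=(B\setminus A)\cup(A\cap B)$, so that $x_A^k=\max\set{M_\alpha,M_n}$ and $x_B^k=\max\set{M_n,M_\beta}$, where $M_\alpha,M_n,M_\beta$ are the maximal $h_k$-values over $A\setminus B$, $A\cap B$, $B\setminus A$ respectively; by Lemma~\ref{lemma:maxHash}(a) these three are independent, with $M_\alpha\sim\Beta{\alpha}{1}$, $M_n\sim\Beta{n}{1}$, $M_\beta\sim\Beta{\beta}{1}$. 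In particular $x_A^k\perp x_B^k$ given $M_n$, so all the covariance passes through $M_n$.

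Next I would pass to $E_A=-\log x_A^k$ and $E_B=-\log x_B^k$, which turns maxima of uniforms into minima of exponentials: with $G=-\log M_n\sim\mathrm{Exp}(n)$, and $-\log M_\alpha\sim\mathrm{Exp}(\alpha)$, $-\log M_\beta\sim\mathrm{Exp}(\beta)$ independent of one another and of $G$, we get $E_A=\min\set{-\log M_\alpha,\,G}$ and $E_B=\min\set{G,\,-\log M_\beta}$. Using $\Ex{\min\set{X,t}}=\frac{1}{\alpha}\paren{1-e^{-\alpha t}}$ for $X\sim\mathrm{Exp}(\alpha)$, conditioning on $G$ gives $\Ex{E_A\mid G}=\frac{1}{\alpha}\paren{1-e^{-\alpha G}}$ and $\Ex{E_B\mid G}=\frac{1}{\beta}\paren{1-e^{-\beta G}}$; by the conditional independence above, the law of total covariance yields
\[
\Cov{E_A}{E_B}=\Cov{\Ex{E_A\mid G}}{\Ex{E_B\mid G}}=\frac{1}{\alpha\beta}\,\Cov{e^{-\alpha G}}{e^{-\beta G}}.
\]
Evaluating the last covariance with the exponential moment generating function $\Ex{e^{-sG}}=n/(n+s)$, and using $n+\alpha=a$, $n+\beta=b$, $n+\alpha+\beta=u$ together with the elementary identity $ab-nu=\alpha\beta$, everything collapses to $\Cov{E_A}{E_B}=\dfrac{n}{u\,a\,b}$.

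Finally I would transfer this to the estimators with the Delta method (Lemma~\ref{lemma:delta}). The cardinality estimator of Lemma~\ref{lemma:hyper} is, up to asymptotically negligible terms, a smooth function $g_a$ of the sketch average $\overline{E}_A=\frac1m\sum_k E_A^k$ (and likewise $g_b$ of $\overline{E}_B$), with $\overline{E}_A\to\Normal{1/a}{1/(ma^2)}$ and similarly for $B$; since a larger set produces larger maximal hash values and hence smaller $-\log$'s, $g_a$ is decreasing, and matching its slope against $\Var{\widehat{a}}=a^2/m$ from Lemma~\ref{lemma:hyper} forces $g_a'(1/a)=-a^2$ and $g_b'(1/b)=-b^2$. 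The bivariate Delta method applied to $(\overline{E}_A,\overline{E}_B)$, whose cross-hash terms vanish by independence so that $\Cov{\overline{E}_A}{\overline{E}_B}=\frac1m\Cov{E_A}{E_B}$, then gives
\[
\Cov{\widehat{a}}{\widehat{b}}=g_a'(1/a)\,g_b'(1/b)\,\Cov{\overline{E}_A}{\overline{E}_B}=\frac{a^2b^2}{m}\cdot\frac{n}{u\,a\,b}=\frac{n\cdot a\cdot b}{m\cdot u}.
\]
The main obstacle is precisely this last step: rigorously certifying that the cardinality estimator in use is asymptotically equivalent to a differentiable function of such a sketch average with nonzero derivative, so that the multivariate Delta method legitimately carries $\Cov{\overline{E}_A}{\overline{E}_B}$ over to $\Cov{\widehat{a}}{\widehat{b}}$; the degenerate cases $\alpha=0$ or $\beta=0$ (one set contained in the other) are then recovered by continuity. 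By contrast, the covariance computation itself is short once the ``condition on $M_n$'' idea and the exponential reformulation are in place.
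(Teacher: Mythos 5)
Your argument is correct and arrives at the right answer, but it computes the central covariance by a genuinely different route than the paper. The paper works directly on the $x$-scale: it integrates $x_A x_B$ against the three-case joint density from Section~\ref{sub:likelihood} to obtain the exact value $\Cov{x_A}{x_B}=\frac{n}{(u+2)(a+1)(b+1)}$, then fixes a concrete min-sketch estimator $\widehat{a}=m/\sum_i(1-x_A^i)$ (from \cite{Cosma:2011}) and runs the multivariate Delta method with an explicit covariance matrix $\Sigma$. You instead condition on the shared maximum $M_n$ and pass to the exponential scale, so the covariance falls out of the law of total covariance, the MGF $\Ex{e^{-sG}}=n/(n+s)$, and the identity $ab-nu=\alpha\beta$; this is cleaner, makes transparent that all dependence flows through $A\cap B$, and your exact $\Cov{E_A}{E_B}=n/(uab)$ is precisely the log-scale analogue of the paper's $n/((u+2)(a+1)(b+1))$ --- both yield $\frac{nab}{mu}$ only asymptotically, which is the level at which the lemma is really being asserted. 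The one place your write-up is weaker is the transfer step, as you yourself flag: the variance-matching trick pins down $g_a'(1/a)=-a^2$ only under the unproved assumption that the estimator is (asymptotically) a smooth function of the single sketch average $\overline{E}_A$; this is not literally true of HyperLogLog (a bit-pattern estimator), and the paper sidesteps the issue by explicitly switching, ``for simplicity,'' to $\widehat{a}=m/\sum_i(1-x_A^i)$, for which your $g_a$ is exactly $w\mapsto 1/w$ up to the $1-x\approx -\log x$ approximation. If you likewise commit to that estimator, your proof closes completely and is, if anything, tidier than the paper's.
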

\begin{proof} \ \\
Denote $x_A$ and $x_B$ as the maximal hash values for $A$ and $B$ respectively. 
\ifLongversion
We will compute the covariance $\Cov{x_A}{x_B}$ for one hash function, then generalize to all hash functions and use the Delta Method to derive $\Cov{\widehat{a}}{\widehat{b}}$.
\fi
Using our notation for $I_1,I_2$ and $I_3$ from Section \ref{sub:likelihood}, and applying the linearity of expectation, we obtain:
\begin{align}\label{eq:xAdotxB}
\Ex{x_A \cdot x_B} &= \Ex{(x_A\cdot x_B) \cdot (I_1+I_2+I_3)} \nonumber \\
&= \Ex{(x_A \cdot x_B) \cdot I_1} + \Ex{(x_A \cdot x_B) \cdot I_2} + \Ex{(x_A \cdot x_B) \cdot I_3} \text{.} 
\end{align}
We consider each term separately:

\begin{description}

\item[1) $x_A = x_B$:] From Eq. (\ref{eq:case1}), \hfill \smallskip

\begin{equation*}
\Ex{(x_A \cdot x_B) \cdot I_1} = \int_{x_A = x_B} {n \cdot x_A^{u+1} \,dx} = 
\ifLongversion
\\
= n \cdot \frac{1}{u+2}\cdot x_A^{u+2} \mid_0^1 = 
\\
\fi
\frac{n}{u+2} \text{.}
\end{equation*} 

\item[2) $x_A < x_B$:] From Eq. (\ref{eq:case2}), \smallskip

\begin{align*}
\Ex{(x_A \cdot x_B) \cdot I_2} &= \iint_{x_A > x_B} {\beta \cdot x_B^{\beta} \cdot a \cdot x_A^a \,dx_A\,dx_B} = \int_0^1 {\beta \cdot x_B^{\beta}\,dx_B} \cdot \int_0^{x_B} {a \cdot x_A^{a}\,dx_A} = \\
\ifLongversion
&= \int_0^1 {\beta \cdot x_B^{\beta} \cdot \frac{a}{a+1} \cdot x_A^{a+1} \mid_0^{x_B} \,dx_B} = \\
&= \int_0^1 {\beta \cdot x_B^{u+1} \cdot \frac{a}{a+1} \,dx_B} = \\
&= \frac{a}{a+1} \cdot \frac{\beta}{u+2} \cdot x_B^{u+2} \mid_0^1 = \\
\fi
&= \frac{a}{a+1} \cdot \frac{\beta}{u+2} \text{.}
\end{align*}

\item[3) $x_B > x_A$:] \hfill \smallskip

This case is symmetrical to the second case. We get that
\begin{align*}
\Ex{(x_A \cdot x_B) \cdot I_3} = \frac{b}{b+1} \cdot \frac{\alpha}{u+2} \text{.}
\end{align*}

\end{description}
Substituting the three terms into Eq. (\ref{eq:xAdotxB}) yields that
\begin{align} \label{eq:mul}
\Ex{x_A \cdot x_B} &= \frac{n}{u+2} + \frac{a}{a+1}\cdot\frac{\beta}{u+2} + \frac{b}{b+1}\cdot\frac{\alpha}{u+2} = \nonumber \\
&= \frac{ab\cdot (u+2) + n}{(u+2)(a+1)(b+1)}\text{.}
\end{align}
Using the covariance definition, we obtain:
\begin{align} \label{eq:covi}
\Cov{x_A}{x_B} &= \Ex{x_A \cdot x_B} - \Ex{x_A}\cdot\Ex{x_B} = \frac{ab\cdot (u+2) + n}{(u+2)(a+1)(b+1)} - \frac{a}{a+1}\cdot \frac{b}{b+1} = \nonumber \\
&= \frac{n}{(u+2)(a+1)(b+1)} \text{.}
\end{align}
The first equality is due to the covariance definition, and the second equality is due to Eq. (\ref{eq:mul}) and Lemma \ref{lemma:maxHash}(b). 

Eq. (\ref{eq:covi}) states the covariance for one hash function. We can generalize it for all hash functions. Let $x_A^i$ and $x_B^i$ be the maximal hash values for the $i$th hash function, for $A$ and $B$ respectively. Then
\begin{equation} \label{eq:covAll}
\Cov{x_A^i}{x_B^j} = \begin{cases} 
      0 & i \neq j \\
      \frac{n}{(u+2)(a+1)(b+1)} & i=j \:\text{.}
   \end{cases}
\end{equation}
We are now ready to compute $\Cov{\widehat{a}}{\widehat{b}}$. 
To this end, we first need to choose the cardinality estimator that we will use to estimate $\widehat{a}$ and $\widehat{b}$. For simplicity, we use the estimator from \cite{Cosma:2011}:
\begin{equation} \label{eq:est}
\widehat{a} = \frac{m}{\sum_{i=1}^{m}{(1-x_A^i)}} \text{,}
\end{equation}
where $m$ hash functions are used (symmetrically for $\widehat{b}$). 
Denoting $X = \sum{(1-x_A^i)}$, we can rewrite the estimator as $\widehat{a} = \frac{m}{X}$ ($Y$ is symmetrically defined for $\widehat{b}$). For each $i$, $x_A^i$ is the maximum of $a$ uniformly distributed variables, and thus
\begin{equation}\label{eq:expxi}
\Ex{x_A^i} = \frac{a}{a+1} \:\:\:\: \text{, and} \:\:\:\: \Var{x_A^i} = \frac{a}{(a+1)^2(a+2)} \text{.}
\end{equation}
Both equalities follow the beta distribution of $x_A^i$ (Lemma \ref{lemma:maxHash}(b)). From Eq. (\ref{eq:expxi}) it follows that
\begin{align}\label{eq:eX}
\Ex{X} = \Ex{\sum{(1-x_A^i)}} = m \cdot {(1-\Ex{x_A^i})} = m \cdot {\frac{1}{a+1}} = \frac{m}{a+1}\text{,}
\end{align}

and the variance
\begin{align}\label{eq:vX}
\Var{X} = \Var{\sum{(1-x_A^i)}} = m \cdot {\Var{x_A^i}} = \frac{m a}{(a+1)^2(a+2)}\text{.}
\end{align}
From Eqs. (\ref{eq:eX}) and (\ref{eq:vX}) we can conclude that $X \to \Normal{\frac{m}{a+1}}{\frac{m a}{(a+1)^2(a+2)}}$, and symmetrically for $Y$.
Recall that the estimators can be rewritten as $\widehat{a} = \frac{m}{X}$ (symmetrically for $\widehat{b}$). Applying the Delta Method (multivariate version) for the estimator's vector $(\widehat{a},\widehat{b})=(m/X, m/Y)$ yields
\begin{equation*}
\left( \begin{array}{c}
\widehat{a} \\
\widehat{b} \end{array} \right)
= 
\left( \begin{array}{c}
m/X \\
m/Y \end{array} \right)
\to
\Normal{\left( \begin{array}{c}
a \\
b \end{array} \right)}
{g' \cdot \Sigma \cdot g'^{T}} \text{,}
\end{equation*} 

where $g(X,Y) = (m/X,m/Y)$ is the function used in the Delta Method, $g'$ is its partial-derivatives matrix and $\Sigma$ is the variance matrix:
\begin{equation*}
\Sigma =
\left( \begin{array}{cc}
\Var{X} & \Cov{X}{Y} \\
\Cov{X}{Y} & \Var{Y} \end{array} \right)\text{.}
\end{equation*}
Using covariance properties and Eq. (\ref{eq:covAll}), we obtain:
\begin{equation}\label{eq:covXY}
\Cov{X}{Y} = \Cov{\sum{1-x_A^i}}{\sum{1-x_A^j}} = \sum_{i=1}^{m}{\Cov{x_A^i}{x_B^i}} = \frac{m n}{(u+2)(a+1)(b+1)}\text{.}
\end{equation}
Substituting the terms from Eqs. (\ref{eq:vX}) and (\ref{eq:covXY}) in $\Sigma$ yields
\begin{equation*}
\Sigma =
\left( \begin{array}{cc}
\frac{ma}{(a+1)^2(a+2)} & \frac{mn}{(a+1)(b+1)(u+2)} \\
\frac{mn}{(a+1)(b+1)(u+2)} & \frac{mb}{(b+1)^2(b+2)} \end{array} \right)\text{.}
\end{equation*}
Computing $g' \cdot \Sigma \cdot g'^{T}$ yields the final distribution of the estimators:
\begin{equation*}
\left( \begin{array}{c}
\widehat{a} \\
\widehat{b} \end{array} \right)
\to
\Normal{\left( \begin{array}{c}
a \\
b \end{array} \right)}
{\left( \begin{array}{cc}
\frac{a(a+1)^2}{m(a+2)} & \frac{n(a+1)(b+1)}{m(u+2)} \\
\frac{n(a+1)(b+1)}{m(u+2)} & \frac{b(b+1)^2}{m(b+2)} \end{array} \right)}\text{.}
\end{equation*} 
Finally, according to the Delta Method, $\Cov{\widehat{a}}{\widehat{b}}$ is the term in place $[1,2]$ in the matrix.
\end{proof}

\begin{lemma} \label{lem:covAandU}\ \\
The covariance of $\widehat{\abs{A}}$ and $\widehat{\abs{A \cup B}}$ satisfies $\Cov{\widehat{a}}{\widehat{u}} = \frac{a^2}{m} + \frac{n \cdot a \cdot b}{ m \cdot u} - \frac{u \cdot n \cdot a^2(b^2 + \alpha\beta)}{m \cdot Z}$, where $m$ is the number of storage units, and $Z$ satisfies:
\begin{equation*}
Z = {\alpha\cdot n}(a^2 + \alpha\beta) + \beta \cdot n(b^2 + \alpha\beta) + (a^2 + \alpha\beta)(b^2 + \alpha\beta)\text{.}
\end{equation*}
\end{lemma}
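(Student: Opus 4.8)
The plan is to reduce the covariance of $\widehat{a}$ and $\widehat{u}$ to quantities that are already in hand, using the inclusion--exclusion identity $u = a + b - n$. This identity lets us write the union-cardinality estimate as $\widehat{u} = \widehat{a} + \widehat{b} - \widehat{n}$, where $\widehat{n}$ is the Maximum Likelihood estimator of Section \ref{sec:MLscheme}. Expanding the covariance by bilinearity then gives
\[
\Cov{\widehat{a}}{\widehat{u}} = \Var{\widehat{a}} + \Cov{\widehat{a}}{\widehat{b}} - \Cov{\widehat{a}}{\widehat{n}} .
\]
Two of these three terms are already available: $\Var{\widehat{a}} = \tfrac{a^2}{m}$ by Lemma \ref{lemma:hyper}, and $\Cov{\widehat{a}}{\widehat{b}} = \tfrac{n a b}{m u}$ by Lemma \ref{lem:covAandB}. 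So the whole problem collapses to evaluating $\Cov{\widehat{a}}{\widehat{n}}$.

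For that term I would invoke the standard fact that the asymptotic covariance matrix of a Maximum Likelihood estimator is the inverse Fisher information matrix \cite{Shao1998}; hence $\Cov{\widehat{a}}{\widehat{n}} = (\mathbb{F}^{-1})_{1,3}$ with $\mathbb{F}$ the matrix already computed in Eq. (\ref{eq:fisherMat}). Thus the core of the proof is simply to invert the $3\times 3$ matrix $\mathbb{F}$, or really just to extract its $(1,3)$ entry. Writing $\mathbb{F} = \tfrac{m}{u}M$, the $(1,3)$ entry of $M^{-1}$ equals the $(3,1)$ cofactor of $M$ divided by $\det M$; the cofactor is immediate, namely $\tfrac{1}{\alpha}\left(\tfrac{\alpha}{b^2}+\tfrac{1}{\beta}\right) = \tfrac{b^2+\alpha\beta}{\alpha b^2\beta}$, so everything hinges on $\det M$.

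The main obstacle is precisely this determinant. One expands $\det M$ along the row containing the zero, puts the three resulting products over the common denominator $a^2 b^2\alpha^2\beta^2 n$, and must then recognize that the numerator factors as $\alpha\beta Z$, where $Z = \alpha n(a^2+\alpha\beta) + \beta n(b^2+\alpha\beta) + (a^2+\alpha\beta)(b^2+\alpha\beta)$ is exactly the quantity that governs the optimal variance in Section \ref{sub:optimal}. Once that factorization is checked, $\det M = \tfrac{Z}{a^2 b^2\alpha\beta n}$, so $(M^{-1})_{1,3} = \tfrac{n a^2(b^2+\alpha\beta)}{Z}$ and $(\mathbb{F}^{-1})_{1,3} = \tfrac{u n a^2(b^2+\alpha\beta)}{m Z}$. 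Substituting the three pieces into the bilinearity expansion yields $\Cov{\widehat{a}}{\widehat{u}} = \tfrac{a^2}{m} + \tfrac{n a b}{m u} - \tfrac{u n a^2(b^2+\alpha\beta)}{m Z}$, as claimed; the companion identity $\Cov{\widehat{b}}{\widehat{u}}$ needed for Theorem \ref{thm:n1} then follows by the symmetry $a\leftrightarrow b$, $\alpha\leftrightarrow\beta$, which swaps rows/columns $1$ and $2$ of $\mathbb{F}$ and hence replaces $(b^2+\alpha\beta)$ by $(a^2+\alpha\beta)$ in the numerator.
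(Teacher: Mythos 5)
Your proposal is correct and follows essentially the same route as the paper: write $\widehat{u}=\widehat{a}+\widehat{b}-\widehat{n}$, expand by bilinearity, use Lemma \ref{lemma:hyper} and Lemma \ref{lem:covAandB} for the first two terms, and identify $\Cov{\widehat{a}}{\widehat{n}}$ with $(\mathbb{F}^{-1})_{1,3}$. The only difference is that you spell out the cofactor and determinant computation (with $\det M = Z/(a^2b^2\alpha\beta n)$), which the paper leaves implicit, and your arithmetic checks out.
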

\begin{proof} \ \\
According to Fisher information matrix properties, $\widehat{\Cov{\widehat{a}}{\widehat{n}}}=(\mathbb{F}^{-1})_{1,3}$ is a Maximum Likelihood estimator for $\Cov{\widehat{a}}{\widehat{n}}$, 
where $(\mathbb{F}^{-1})_{1,3}$ is the term in place $[1,3]$ in the inverse Fisher information matrix. Computing the term $(\mathbb{F}^{-1})_{1,3}$ yields that
\begin{equation}\label{eq:covAandN}
\Cov{\widehat{a}}{\widehat{n}} = \frac{u \cdot n \cdot a^2(b^2 + \alpha\beta)}{m\cdot Z} \text{.}
\end{equation}
Therefore,
\begin{align*} \label{eq:varaplusb}
\Cov{\widehat{a}}{\widehat{u}} &= \Cov{\widehat{a}}{\widehat{a}+\widehat{b}-\widehat{n}} = \Cov{\widehat{a}}{\widehat{a}} + \Cov{\widehat{a}}{\widehat{b}} - \Cov{\widehat{a}}{\widehat{n}} = \\
&= \frac{a^2}{m} + \frac{n \cdot a \cdot b}{ m \cdot u} - \frac{u \cdot n \cdot a^2(b^2 + \alpha\beta)}{m\cdot Z} \text{.}
\end{align*}
The first equality is due to the inclusion-exclusion principle, and the second is due to covariance properties. The third equality is due to covariance properties, Lemma \ref{lemma:hyper}, Lemma \ref{lem:covAandB} and Eq. (\ref{eq:covAandN}).
\end{proof}
Similarly, we can obtain the covariance of $\widehat{\abs{B}}$ and $\widehat{\abs{A \cup B}}$:
\begin{equation*}
\Cov{\widehat{b}}{\widehat{u}} = \frac{b^2}{m} + \frac{n \cdot a \cdot b}{ m \cdot u} - \frac{u \cdot n \cdot b^2(a^2 + \alpha\beta)}{m \cdot Z} \text{.}
\end{equation*}

\end{document}